\documentclass[11pt,reqno]{amsart}

\usepackage{amsmath,amssymb,amsfonts,amsthm,mathtools,mathrsfs}
\usepackage[margin=1.05in]{geometry}
\usepackage{enumitem}
\usepackage[hidelinks]{hyperref}

\newcommand{\II}{\mathrm{II}}
\newcommand{\IIo}{\mathring{\mathrm{II}}}

\numberwithin{equation}{section}

\theoremstyle{plain}
\newtheorem{theorem}{Theorem}[section]
\newtheorem{proposition}[theorem]{Proposition}
\newtheorem{lemma}[theorem]{Lemma}
\newtheorem{corollary}[theorem]{Corollary}

\theoremstyle{definition}
\newtheorem{definition}[theorem]{Definition}

\theoremstyle{remark}
\newtheorem{remark}[theorem]{Remark}

\title[Charged Hawking mass on CMC collars]
{Charged Hawking Mass on CMC Collars
\\with Charged Matter}

\author{Sehong Park}
\address{Human-Centered Artificial Intelligence Research Institute,
Ewha Womans University, Seoul, Republic of Korea}
\email{pshong21@ewha.ac.kr}
\subjclass[2020]{Primary 53C24, 83C57; Secondary 83C22, 58J50}
\keywords{charged Hawking mass, charged matter, Reissner--Nordstr\"om,
constant-mean-curvature collar, rigidity,
horizon Hessian}

\hypersetup{
  pdftitle={Charged Hawking Mass on CMC Collars with Charged Matter},
  pdfauthor={Sehong Park},
  pdfsubject={Exact charged Hawking-mass identities and rigidity on positive-Jacobi CMC collars},
  pdfkeywords={charged Hawking mass, charged matter, Reissner-Nordstrom, CMC collar, rigidity, horizon Hessian}
}

\begin{document}

\begin{abstract}
\sloppy
We study strictly stable minimal horizons in three-dimensional
time-symmetric charged initial data sets with cosmological constant,
allowing charged matter. Along the constant-mean-curvature foliation
issuing from the horizon through leaves with positive Jacobi operator,
we prove exact first-variation identities for the charged Hawking mass. When the charge
density has one sign compatible with the horizon charge, Hawking masses are nondecreasing, and equality between the two boundary
values of the fixed-charge mass forces the entire collar to be isometric to an annulus in a source-free constant-curvature
Reissner--Nordstr\"om--(anti-)de Sitter model. From this we recover the local rigidity results under a local maximality condition.  At the minimal endpoint the defect entering these identities computes the fixed-charge
Hawking-mass Hessian along the normalized lapse, and its vanishing
characterizes the model horizon data.
\end{abstract}

\maketitle

\section{Introduction}
\label{sec:intro}

Let $(M^3,\bar g,E)$ be a three-dimensional time-symmetric charged
initial data set with cosmological constant $\Lambda\in\mathbb R$, so
that
\[
 R_{\bar g}\ge 2|E|^2+2\Lambda ,
\]
and let $\Sigma_0\subset M$ be a connected, two-sided, strictly stable
minimal surface.  Strict stability determines a locally unique
constant-mean-curvature branch $\{\Sigma_h\}$ issuing from $\Sigma_0$,
with $H(\Sigma_h)=h$, whose normal lapse solves
$\mathcal J_h\rho_h=1$ and is positive for $|h|$ small.  For the finite-region statements, we work on an embedded
positive-Jacobi CMC collar
$\{\Sigma_h\}_{0\le h\le h_*}$ and set
$\Sigma_*:=\Sigma_{h_*}$.

The flux
$q_h:=(4\pi)^{-1}\int_{\Sigma_h}\langle E,\nu_h\rangle\,dA_h$ through the
leaves obeys
\begin{equation}\label{eq:intro-flux}
 \frac{d}{dh}q_h=q_h'
 =
 \frac1{4\pi}\int_{\Sigma_h}
 \rho_h\,\operatorname{div}_{\bar g}E\,dA_h .
\end{equation}
Thus $\operatorname{div}_{\bar g}E=0$ implies conservation of the leaf
charge, whereas a nonzero charge source allows the flux to vary along
the CMC family.

For a closed orientable surface $\Gamma$ and a charge parameter
$a\in\mathbb R$ we use
\begin{equation}\label{eq:intro-hawking}
 m_{\Lambda,a}(\Gamma)
 =
 \sqrt{\frac{|\Gamma|}{16\pi}}
 \left(
  \frac{\chi(\Gamma)}2
  +\frac{4\pi a^2}{|\Gamma|}
  -\frac{\Lambda|\Gamma|}{12\pi}
  -\frac1{16\pi}\int_\Gamma H^2\,dA
 \right),
\end{equation}
which for $\Lambda=0$, spherical topology, and $a$ equal to the flux is
the charged Hawking mass of Disconzi--Khuri \cite{DisconziKhuri}; see
also Jang \cite{Jang}.  The
model geometries are the spatial
Reissner--Nordstr\"om--(anti-)de Sitter metrics
\begin{equation}\label{eq:intro-model}
 \bar g=\frac{dR^2}{f(R)}+R^2\sigma_\kappa,
 \qquad
 E=\frac q{R^2}\sqrt{f(R)}\,\partial_R,
 \qquad
 f=\kappa-\frac{2m}R+\frac{q^2}{R^2}-\frac{\Lambda R^2}3,
\end{equation}
where $\sigma_\kappa$ is a constant-curvature metric of curvature
$\kappa$ and area $4\pi$ on a closed orientable surface, so that
$\kappa=\chi/2$ by Gauss--Bonnet. These models saturate the scalar constraint and have constant electric flux \(q\).  The normalization $\chi(\Gamma)/2$ in
\eqref{eq:intro-hawking} is chosen so that $m_{\Lambda,q}$ equals $m$ on
every coordinate surface of \eqref{eq:intro-model}, in every genus.
Along the CMC branch there are then two natural choices of charge
parameter: the fixed-charge mass $m_{\Lambda,q}(\Sigma_h)$ with
$q:=q_0$, and the current-charge mass $m_{\Lambda,q_h}(\Sigma_h)$.  They
agree when the flux is conserved.

Our first result is a pair of exact identities.  Write
$|\Sigma_h|=4\pi r_h^2$, let $\bar\rho_h$ be the average of $\rho_h$,
$w_h=\rho_h-\bar\rho_h$, and
\[
 \mathscr B_h
 =
 \mathscr D(\Sigma_h)
 +
 \frac1{\bar\rho_h^2}\mathcal Q_h(w_h,w_h),
\]
where $\mathscr D$ is the nonnegative geometric and electromagnetic
defect defined in \eqref{eq:defect}.  Then
\begin{align}
 \frac d{dh}m_{\Lambda,q}(\Sigma_h)
 &=
 \frac{r_hh\bar\rho_h}{16\pi}\mathscr B_h
 +
 \frac{h\bar\rho_h}{4r_h}(q_h^2-q^2),
 \label{eq:intro-fixed}\\
 \frac d{dh}m_{\Lambda,q_h}(\Sigma_h)
 &=
 \frac{r_hh\bar\rho_h}{16\pi}\mathscr B_h
 +
 \frac{q_hq_h'}{r_h}.
 \label{eq:intro-current}
\end{align}
The two
identities share $\mathscr B_h$ and differ only in how charge production
is recorded.

Suppose now that $\varsigma q\ge0$ and
$\varsigma\operatorname{div}_{\bar g}E\ge0$ for some
$\varsigma\in\{-1,1\}$.  Since $\rho_h>0$, \eqref{eq:intro-flux} gives
$q_h^2\ge q^2$ and $q_hq_h'\ge0$, so both masses are nondecreasing; if
$q=0$, the fixed-charge mass is nondecreasing with no condition on the
source.  Under either hypothesis,
\begin{equation}\label{eq:intro-chain}
 m_{\Lambda,q_*}(\Sigma_*)
 \ge
 m_{\Lambda,q}(\Sigma_*)
 \ge
 m_{\Lambda,q}(\Sigma_0)=:m_0 .
\end{equation}
In the same-sign case the first equality holds precisely when
$\operatorname{div}_{\bar g}E$ vanishes on the collar, and carries no
geometric information.

The second equality is rigidity: it holds if and only if the collar is a
source-free model region \eqref{eq:intro-model}, with
$\kappa=\chi(\Sigma_0)/2$, mass $m_0$, and $\sigma_\kappa$ inherited
from the horizon. No sign of $\Lambda$ is assumed a priori.  Strict stability instead
imposes the horizon compatibility condition
\[
 \kappa-\frac{q^2}{r^2}-\Lambda r^2>0;
\]
in particular, $\kappa\le0$ forces $\Lambda<0$.  In the spherical case
$\kappa=1$, once equality identifies the collar with the model, the
positive-Jacobi condition holds precisely for
\[
 r<R<R_{\mathrm{ph}}^+(m_0,q)
 :=
 \frac12\left(3m_0+\sqrt{9m_0^2-8q^2}\right),
\]
and degenerates at $R=R_{\mathrm{ph}}^+(m_0,q)$.

On a sufficiently short one-sided branch no continuation hypothesis is
needed.  If $\Sigma_0$ locally maximizes the fixed-charge mass, this
bounds the mass above along the branch, \eqref{eq:intro-fixed} bounds
it below, and equality forces the model geometry. A two-sided neighborhood
follows when $q=0$, or when the same-sign condition holds on each side of
$\Sigma_0$ for the normal pointing into that side.

At the strictly stable minimal endpoint, the same defect appears in the
second variation of the fixed-charge Hawking mass. Set
$|\Sigma_0|=4\pi r^2$, keep $q$ fixed, and let $\rho>0$ solve
$\mathcal J\rho=1$, with average $\bar\rho$, fluctuation $w$, and
$p=\rho/\sqrt{\bar\rho}$.  Then
\begin{equation}\label{eq:intro-endpoint}
 \frac{16\pi}r\left.m_{\Lambda,q}''\right|_p
 =
 \mathscr D(\Sigma_0)+\frac1{\bar\rho^2}\mathcal Q(w,w)
 =
 \mathscr B_0\ge0,
 \qquad
 \left.\frac{d^2}{dh^2}\right|_{h=0}m_{\Lambda,q}(\Sigma_h)
 =\bar\rho\left.m_{\Lambda,q}''\right|_p .
\end{equation}
Vanishing is equivalent to the model horizon data and to nonpositivity
of the full fixed-charge Hessian.  In particular, a nonmodel strictly
stable horizon is never a local maximizer: its own CMC branch selects a
direction of strictly positive response.
Appendix~\ref{app:concentrated-shell} constructs collars whose horizons
carry exact model data and for which
$m_{\Lambda,q}(\Sigma_*)-m_0\to0$, yet which do not converge to the
model in $C^2$. Thus model horizon data alone do not determine the ambient collar;
rigidity of the whole collar follows instead from the fixed-charge
boundary equality
\[
m_{\Lambda,q}(\Sigma_*)=m_{\Lambda,q}(\Sigma_0).
\]

The local rigidity consequence recovers the Hawking-mass rigidity results
of M\'aximo and Nunes \cite{MaximoNunes} and Barros, Batista and Cruz
\cite{BarrosBatistaCruz}, together with their charged analogues due to
Baltazar, Barros and Batista \cite{BaltazarBarrosBatista} and Sousa and
Lima \cite{SousaLima}.  These works treat the positive- and
negative-cosmological-constant regimes separately and, in the charged
cases, assume conserved electric flux, whereas here $\Lambda\in\mathbb R$
is arbitrary and the leaf charge may vary.  The same-sign condition used
above is the one appearing in McCormick's charged Penrose inequality
\cite{McCormickMatter}; see also Khuri, Weinstein and Yamada
\cite{KWY,KWYExtensions} for the global charged Penrose problem.  The
finite-boundary comparison is related to the localized Riemannian Penrose
inequality of Miao \cite{Miao2009}, the CMC boundary estimates of
Miao--Wang--Xie \cite{MiaoWangXie}, and the charged quasi-local Penrose
inequalities of Chen--McCormick \cite{ChenMcCormick}.

Section~\ref{sec:setup} develops the CMC branch, the finite
positive-Jacobi continuation, the flux law, and the two masses.
Section~\ref{sec:exact-defect} proves
\eqref{eq:intro-fixed}--\eqref{eq:intro-current} and their monotonicity
consequences.  Section~\ref{sec:rigidity} treats the two equality cases
and the Reissner--Nordstr\"om--(anti-)de Sitter rigidity theorem.
Section~\ref{sec:endpoint-hessian} proves \eqref{eq:intro-endpoint} and
the horizon characterizations.  Appendix~\ref{app:matter-example} gives
sourced CMC collars and the concentrated-shell example.

\section{The CMC branch, flux, and the defect}
\label{sec:setup}

\subsection{Data and the CMC branch}
\label{subsec:data}

Throughout, $(M^3,\bar g,E)$ is a smooth three-manifold with a smooth
vector field $E$, $\Lambda\in\mathbb R$, and
$\zeta:=\operatorname{div}_{\bar g}E$; the scalar constraint
$R_{\bar g}\ge2|E|^2+2\Lambda$ is assumed on every region considered.
Surfaces $\Gamma\subset M$ are closed, connected, orientable, embedded,
and two-sided, with unit normal $\nu$, Gaussian curvature $K$, mean
curvature $H$ with respect to $\nu$, second fundamental form $\II$ with
trace-free part $\IIo$, area radius $|\Gamma|=4\pi r_\Gamma^2$,
components $E=E^\top+E_\nu\,\nu$ along $\Gamma$, and flux
$q_\Gamma=(4\pi)^{-1}\int_\Gamma E_\nu\,dA$.  The Jacobi operator is
\begin{equation}\label{eq:jacobi}
 \mathcal J=-\Delta_\Gamma-S,
 \qquad
 S=\operatorname{Ric}(\nu,\nu)+|\II|^2,
 \qquad
 \mathcal Q(u,v)=\int_\Gamma u\,\mathcal Jv\,dA,
\end{equation}
with first eigenvalue $\lambda_1(\mathcal J)$; a minimal surface is
strictly stable when $\lambda_1(\mathcal J)>0$.  Our variation
convention is $H'=\mathcal J\rho$ for normal speed $\rho$.  The
same-sign condition on a region is
\begin{equation}\label{eq:same-sign-intro}
 \varsigma q\ge0
 \quad\text{and}\quad
 \varsigma\zeta\ge0
 \qquad\text{for some }\varsigma\in\{-1,1\},
\end{equation}
where $q$ is the horizon flux fixed below.

\begin{definition}[Charged positive-Jacobi CMC region]
\label{def:fillin}
A charged positive-Jacobi CMC region (a \emph{collar}) is a compact region
$\Omega\subset M$ swept out by a smooth family of embeddings
$F_h:\Sigma_0\to M$, with $\Sigma_h=F_h(\Sigma_0)$ and
$h\in[0,h_*]$, such that $(h,x)\mapsto F_h(x)$ is a diffeomorphism
onto $\Omega$ and:
\begin{enumerate}[label=\textup{(\roman*)},itemsep=1pt]
\item $H(\Sigma_h)=h$ and $\Sigma_0$ is a strictly stable minimal surface;
\item $\lambda_1(\mathcal J_h)>0$ and
\[
\rho_h:=\langle\partial_hF_h,\nu_h\rangle>0
\]
for every $h\in[0,h_*]$;
\item $\partial\Omega=\Sigma_0\sqcup\Sigma_*$, where
$\Sigma_*=\Sigma_{h_*}$.
\end{enumerate}
The unit normals $\nu_h$ are chosen to point in the direction of increasing $h$.
We write $r_h=r_{\Sigma_h}$, $q_h=q_{\Sigma_h}$, $q=q_0$,
$\bar\rho_h$ for the average of $\rho_h$, and
$w_h=\rho_h-\bar\rho_h$.
\end{definition}

\begin{remark}\label{rem:collar-function}
The map $\tau:\Omega\to[0,h_*]$ with $\tau\equiv h$ on $\Sigma_h$ is
smooth, and since $\tau$ is leafwise constant its gradient is normal to
every leaf: $\nabla\tau=|\nabla\tau|\nu_h$ with
$|\nabla\tau|=\rho_h^{-1}$.  By the coarea formula the volume element is
$dV_{\bar g}=\rho_h\,dA_h\,dh$.
\end{remark}

\begin{lemma}[Canonical CMC branch]
\label{lem:branch}
Let $\Sigma\subset M$ be a closed, two-sided, strictly stable minimal
surface.  Then there exist $\varepsilon>0$ and a unique smooth
normal-graph family $\{\Sigma_h\}_{|h|<\varepsilon}$ with
$\Sigma_h|_{h=0}=\Sigma$ and $H(\Sigma_h)=h$; its lapse solves
$\mathcal J_h\rho_h=1$, and after shrinking $\varepsilon$ both
$\rho_h>0$ and $\lambda_1(\mathcal J_h)>0$ hold on the family.
\end{lemma}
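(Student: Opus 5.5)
We will prove the statement with the implicit function theorem applied to the mean curvature of normal graphs over $\Sigma$. Fix the unit normal $\nu$ on $\Sigma$ and, for $u$ in a small ball of $C^{2,\alpha}(\Sigma)$, let $\Sigma_u=\{\exp_x(u(x)\nu(x)):x\in\Sigma\}$. Consider the map
\[
 \Phi:C^{2,\alpha}(\Sigma)\times\mathbb R\to C^{0,\alpha}(\Sigma),
 \qquad
 \Phi(u,h)=H(\Sigma_u)\circ\exp(u\nu)-h .
\]
This map is smooth, and $\Phi(0,0)=0$. With the convention $H'=\mathcal J\rho$, its linearization in $u$ at $(0,0)$ is exactly $\mathcal J=-\Delta_\Sigma-S$. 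Strict stability gives $\lambda_1(\mathcal J)>0$. Since $\mathcal J$ is self-adjoint and elliptic, it is then an isomorphism $C^{2,\alpha}\to C^{0,\alpha}$ by Schauder theory and Fredholm alternative. The implicit function theorem therefore yields $\varepsilon>0$ and a unique smooth curve $h\mapsto u_h$ with $u_0=0$ and $\Phi(u_h,h)=0$. This gives the family $\Sigma_h:=\Sigma_{u_h}$ with $H(\Sigma_h)=h$. Uniqueness among normal graphs near $\Sigma$ is part of the IFT conclusion. Smoothness in $(h,x)$ follows from smoothness of $\Phi$ and elliptic regularity, which upgrades $u_h$ to $C^\infty$ with smooth $h$-dependence.

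For the lapse equation, differentiate $H(\Sigma_h)=h$ in $h$. Let $\rho_h=\langle\partial_hF_h,\nu_h\rangle$ be the normal speed. The tangential part of $\partial_hF_h$ only reparametrizes, and $H$ is constant on $\Sigma_h$, so that part contributes a tangential derivative of $H$, which is $0$. The variation formula $H'=\mathcal J_h\rho_h$ therefore gives $\mathcal J_h\rho_h=1$. This holds at every $h$, with $\mathcal J_h$ the Jacobi operator of $\Sigma_h$.

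Positivity comes next. At $h=0$, $\rho_0=\mathcal J^{-1}1$. Since $\lambda_1(\mathcal J)>0$, the maximum principle applies to $\mathcal J$. One way to see this is to write $\rho_0=\phi_1 v$ with $\phi_1>0$ the first eigenfunction; then $v$ satisfies an equation with positive zeroth-order term and a positive right-hand side, so $\rho_0>0$ by the strong maximum principle. Alternatively, testing with $\rho_0^-$ gives $\mathcal Q(\rho_0^-,\rho_0^-)\le0$, hence $\rho_0^-=0$, and Harnack gives strict positivity. Both $\lambda_1(\mathcal J_h)$ and $\rho_h$ depend continuously on $h$, since the coefficients of $\mathcal J_h$ vary smoothly and $\rho_h=\mathcal J_h^{-1}1$. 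Because $\Sigma$ is compact, $\min\rho_h>0$ and $\lambda_1(\mathcal J_h)>0$ persist after shrinking $\varepsilon$.

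The main obstacle is a parametrization subtlety: the normal speed of the normal-graph parametrization is $\partial_h u_h$ times a factor, not $\rho_h$ itself. We handle this by noting that $\rho_h=\langle\partial_hF_h,\nu_h\rangle$ is independent of the tangential reparametrization, and that $\nu_h\to\nu$ as $h\to0$. Hence $\rho_0=\partial_hu_h|_{h=0}$, and the identity $\mathcal J_h\rho_h=1$ is parametrization-invariant.
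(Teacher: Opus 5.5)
Your proposal is correct and follows the paper's route: the implicit function theorem for the mean curvature of normal graphs using invertibility of $\mathcal J$, differentiation of $H(\Sigma_h)=h$ to obtain $\mathcal J_h\rho_h=1$, the maximum principle for $\rho_0>0$, and continuity in $h$. The paper also observes that $\partial_h u>0$ near $h=0$, so the graphs foliate a neighborhood of $\Sigma$, which you leave implicit; your treatment of the function spaces and of why $\rho_0=\partial_h u|_{h=0}$ supplies details the paper omits.
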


\begin{proof}
Let $\mathcal H(u)$ denote the mean curvature of the normal graph of $u$ over $\Sigma$.
Since $D\mathcal H(0)=\mathcal J$ is invertible, the implicit
function theorem gives a unique local family $u(h)$ satisfying
$\mathcal H(u(h))=h$.
Differentiating yields $\mathcal J_h\rho_h=1$, and the maximum
principle associated with $\lambda_1(\mathcal J)>0$ gives
$\rho_0>0$.
Since $\partial_hu(0,\cdot)=\rho_0$, after shrinking $\varepsilon$
we have $\partial_hu(h,\cdot)>0$, so the normal graphs foliate a
neighborhood of $\Sigma$.
By continuity, after shrinking $\varepsilon$ again if necessary,
$\rho_h>0$ and $\lambda_1(\mathcal J_h)>0$ throughout the family.
\end{proof}

\subsection{Flux and charge along the leaves}
\label{subsec:flux}

\begin{lemma}[Flux laws]
\label{lem:collar-basics}
On a charged positive-Jacobi CMC region,
\begin{align}
 \frac{d}{dh}q_h=q_h'&=\frac1{4\pi}\int_{\Sigma_h}\rho_h\,\zeta\,dA_h,
 \label{eq:flux-evolution}\\
 q_{h_2}-q_{h_1}
 &=\frac1{4\pi}\int_{\Omega_{h_1,h_2}}\zeta\,dV_{\bar g},
 \qquad 0\le h_1<h_2\le h_*,
 \label{eq:flux-slab}
\end{align}
where $\Omega_{h_1,h_2}$ is the slab between the two leaves, and on
every leaf
\begin{equation}\label{eq:leaf-charge-square}
 \int_{\Sigma_h}|E|^2\,dA_h
 =\int_{\Sigma_h}|E_h^\top|^2\,dA_h
 +\int_{\Sigma_h}\Bigl(E_{\nu_h}-\frac{q_h}{r_h^2}\Bigr)^2dA_h
 +\frac{4\pi q_h^2}{r_h^2}.
\end{equation}
Under \eqref{eq:same-sign-intro}, $\varsigma q_h$ is nondecreasing, so
\begin{equation}\label{eq:charge-monotonicity}
 \varsigma q_h\ge\varsigma q\ge0,
 \qquad
 q_h^2\ge q^2
 \qquad\text{for all }h.
\end{equation}
\end{lemma}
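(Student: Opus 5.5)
The plan is to get the slab identity \eqref{eq:flux-slab} first from the divergence theorem and then differentiate it to obtain \eqref{eq:flux-evolution}. The slab $\Omega_{h_1,h_2}=\bigcup_{h_1\le h\le h_2}\Sigma_h$ is a compact region with boundary $\Sigma_{h_2}\sqcup\Sigma_{h_1}$. The outward normal is $\nu_{h_2}$ on $\Sigma_{h_2}$ and $-\nu_{h_1}$ on $\Sigma_{h_1}$, since $\nu_h$ points toward increasing $h$. The divergence theorem then gives
\[
\int_{\Omega_{h_1,h_2}}\zeta\,dV_{\bar g}=\int_{\Sigma_{h_2}}E_{\nu}\,dA-\int_{\Sigma_{h_1}}E_\nu\,dA=4\pi(q_{h_2}-q_{h_1}).
\]
Next, Remark~\ref{rem:collar-function} (the coarea formula) rewrites the volume integral as
\[
\int_{h_1}^{h_2}\Bigl(\int_{\Sigma_h}\rho_h\zeta\,dA_h\Bigr)dh .
\]
The inner integral is continuous in $h$ because the family is smooth, so differentiating in $h_2$ yields \eqref{eq:flux-evolution}. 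This route avoids computing the first variation of $E_\nu\,dA$ directly. The only point needing care is the orientation bookkeeping, which is fixed by Definition~\ref{def:fillin}.

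For the leaf identity \eqref{eq:leaf-charge-square}, decompose $|E|^2=|E^\top_h|^2+E_{\nu_h}^2$ pointwise. Then expand $E_\nu^2=(E_\nu-c)^2+2cE_\nu-c^2$ with the constant $c=q_h/r_h^2$. Integrating over $\Sigma_h$ and using $\int E_\nu\,dA=4\pi q_h$ and $|\Sigma_h|=4\pi r_h^2$ gives
\[
2c\cdot4\pi q_h-c^2\cdot4\pi r_h^2=\frac{4\pi q_h^2}{r_h^2},
\]
which is exactly the last term.

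For the monotonicity, multiply \eqref{eq:flux-evolution} by $\varsigma$. Since $\rho_h>0$ by condition (ii) and $\varsigma\zeta\ge0$ under \eqref{eq:same-sign-intro}, we get $(\varsigma q_h)'\ge0$. Hence $\varsigma q_h\ge\varsigma q_0=\varsigma q\ge0$. Squaring two nonnegative quantities that are ordered gives $q_h^2=(\varsigma q_h)^2\ge(\varsigma q)^2=q^2$, which is \eqref{eq:charge-monotonicity}. I expect no real obstacle anywhere; the step that needs the most care is the sign of the normal in the divergence-theorem step.
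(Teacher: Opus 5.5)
Your proposal is correct and follows the paper's proof essentially step for step. You obtain the slab identity from the divergence theorem with matched orientations, differentiate it using $dV_{\bar g}=\rho_h\,dA_h\,dh$, expand the square for the leaf identity, and use $\rho_h>0$ with $\varsigma\zeta\ge0$ to get $\varsigma q_h'\ge0$.
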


\begin{proof}
\eqref{eq:flux-slab} is the divergence theorem with the outward
orientations matched along the branch, and differentiating it with
$dV_{\bar g}=\rho_h\,dA_h\,dh$ from
Remark~\ref{rem:collar-function} gives \eqref{eq:flux-evolution}.  For
\eqref{eq:leaf-charge-square}, decompose
$E=E_h^\top+E_{\nu_h}\nu_h$ and expand the square: since
$\int E_{\nu_h}=4\pi q_h$ and $|\Sigma_h|=4\pi r_h^2$,
\[
 \int_{\Sigma_h}\Bigl(E_{\nu_h}-\frac{q_h}{r_h^2}\Bigr)^2
 =\int_{\Sigma_h}E_{\nu_h}^2-\frac{4\pi q_h^2}{r_h^2}.
\]
Finally, $\rho_h>0$ and \eqref{eq:same-sign-intro} give
$\varsigma q_h'\ge0$ in \eqref{eq:flux-evolution}, hence
\eqref{eq:charge-monotonicity}.
\end{proof}

\subsection{Hawking masses and the defect}
\label{subsec:masses}

On a CMC leaf, $H\equiv h$ and $\chi(\Sigma_h)=\chi(\Sigma_0)=2\kappa$,
so \eqref{eq:intro-hawking} reduces to
\begin{equation}\label{eq:hawking-cmc}
 m_{\Lambda,a}(\Sigma_h)
 =\frac{r_h}2
 \Bigl(\kappa+\frac{a^2}{r_h^2}-\frac{\Lambda r_h^2}3
 -\frac{r_h^2h^2}4\Bigr),
\end{equation}
and the two charge choices differ by an explicit gap:
\begin{equation}\label{eq:current-fixed-mass}
 m_{\Lambda,q_h}(\Sigma_h)
 =m_{\Lambda,q}(\Sigma_h)+\frac{q_h^2-q^2}{2r_h}.
\end{equation}
With $m_0:=m_{\Lambda,q}(\Sigma_0)$ we also set
\begin{equation}\label{eq:f0}
 f_0(R):=\kappa-\frac{2m_0}{R}+\frac{q^2}{R^2}-\frac{\Lambda R^2}{3},
\end{equation}
the function $f$ of \eqref{eq:intro-model} with mass $m_0$ and charge
$q$.

For a leaf $\Gamma$ with flux $q_\Gamma$, define the defect
\begin{equation}\label{eq:defect}
 \mathscr D(\Gamma)
 =\frac12\int_\Gamma\bigl(R_{\bar g}-2|E|^2-2\Lambda\bigr)dA
 +\frac12\int_\Gamma|\IIo|^2\,dA
 +\int_\Gamma|E^\top|^2\,dA
 +\int_\Gamma\Bigl(E_\nu-\frac{q_\Gamma}{r_\Gamma^2}\Bigr)^2dA .
\end{equation}
Each term is nonnegative under the scalar constraint. We also define
\begin{equation}\label{eq:production-defect}
 \mathscr B_h
 =\mathscr D(\Sigma_h)
 +\frac1{\bar\rho_h^2}\mathcal Q_h(w_h,w_h)\ge0 .
\end{equation}

The defect vanishes exactly when
$R_{\bar g}-2|E|^2-2\Lambda=0$, the leaf
is umbilic, and the electric field is normal with leafwise constant
flux density; the lapse term vanishes exactly when $w_h=0$.

\subsection{The horizon stability bound}
\label{subsec:constant-slack}

For a strictly stable minimal $\Sigma_0$ with $|\Sigma_0|=4\pi r^2$ and
flux $q$, set
\[
 \Theta_{\kappa,\Lambda}
 =\kappa-\frac{q^2}{r^2}-\Lambda r^2 .
\]

\begin{lemma}[Horizon stability bound]
\label{lem:constant-slack}
Under the scalar constraint,
\begin{equation}\label{eq:constant-slack}
 \mathcal Q(1,1)=4\pi\Theta_{\kappa,\Lambda}-\mathscr D(\Sigma_0),
\end{equation}
and hence
\begin{equation}\label{eq:auto-subextremal}
 \lambda_1(\mathcal J)\,r^2
 \ \le\
 \Theta_{\kappa,\Lambda}-\frac{\mathscr D(\Sigma_0)}{4\pi}
 \ \le\
 \Theta_{\kappa,\Lambda},
\end{equation}
so strict stability forces $\Theta_{\kappa,\Lambda}>0$.  Consequently
\begin{equation}\label{eq:horizon-eigenvalue-bound}
 \bigl(\Lambda+\lambda_1(\mathcal J)\bigr)r^4-\kappa r^2+q^2\le0,
 \qquad
 \Lambda r^4-\kappa r^2+q^2<0 ;
\end{equation}
in particular $\kappa\le0$ forces $\Lambda<0$, and if
$\Lambda+\lambda_1(\mathcal J)>0$, then $\kappa=1$ and
$4(\Lambda+\lambda_1(\mathcal J))q^2\le1$.
\end{lemma}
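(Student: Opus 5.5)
The plan is to compute $\mathcal Q(1,1)=-\int_{\Sigma_0}S\,dA$ directly and rewrite $S$ using the Gauss equation. Since $H=0$ on $\Sigma_0$, the Gauss equation gives $\operatorname{Ric}(\nu,\nu)=\tfrac12R_{\bar g}-K-\tfrac12|\II|^2$. Hence
\[
S=\tfrac12R_{\bar g}-K+\tfrac12|\II|^2,
\]
and $|\II|^2=|\IIo|^2$ because $\Sigma_0$ is minimal. Integrating and applying Gauss--Bonnet, $\int K=2\pi\chi=4\pi\kappa$, yields
\[
\mathcal Q(1,1)=4\pi\kappa-\tfrac12\int R_{\bar g}-\tfrac12\int|\IIo|^2 .
\]

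Next I split $\tfrac12R_{\bar g}=\tfrac12(R_{\bar g}-2|E|^2-2\Lambda)+|E|^2+\Lambda$. The constant term contributes $\Lambda|\Sigma_0|=4\pi\Lambda r^2$. For $\int|E|^2$ I use \eqref{eq:leaf-charge-square} from Lemma~\ref{lem:collar-basics}; its proof is purely pointwise and needs no collar, so it holds on $\Sigma_0$. This gives the tangential term, the flux-fluctuation term, and $4\pi q^2/r^2$. Collecting these, the four defect terms assemble exactly into $\mathscr D(\Sigma_0)$ from \eqref{eq:defect}, and the remaining constants give $4\pi(\kappa-q^2/r^2-\Lambda r^2)=4\pi\Theta_{\kappa,\Lambda}$. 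This proves \eqref{eq:constant-slack}.

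For \eqref{eq:auto-subextremal}, the Rayleigh quotient with the test function $1$ gives $\lambda_1(\mathcal J)\,4\pi r^2\le\mathcal Q(1,1)$. Dividing by $4\pi$ and using $\mathscr D\ge0$ (each term is nonnegative by the scalar constraint) gives the chain. Strict stability, $\lambda_1>0$, then forces $\Theta>0$. Multiplying the first inequality by $r^2$ and rearranging gives $(\Lambda+\lambda_1)r^4-\kappa r^2+q^2\le0$. Since $\lambda_1>0$, replacing $\lambda_1$ by $0$ makes the left side strictly smaller, which gives $\Lambda r^4-\kappa r^2+q^2<0$.

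The final consequences are short. If $\kappa\le0$, then $\Lambda r^4<\kappa r^2-q^2\le0$, so $\Lambda<0$. If $\Lambda+\lambda_1>0$, then $\kappa r^2\ge(\Lambda+\lambda_1)r^4+q^2>0$, so $\kappa>0$. Since $\kappa=\chi/2$ for a closed orientable surface, this forces $\kappa=1$. The quadratic $(\Lambda+\lambda_1)x^2-x+q^2$ in $x=r^2>0$ is then nonpositive at some positive point. Because its leading coefficient is positive, its discriminant must satisfy $1-4(\Lambda+\lambda_1)q^2\ge0$, which is the claimed bound.

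The main thing to get right is the bookkeeping of signs and factors in the Gauss-equation step, so that the $\tfrac12$ on $\int|\IIo|^2$ and on the scalar-constraint slack match \eqref{eq:defect} exactly. Everything else is routine.
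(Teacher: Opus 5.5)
Your proposal is correct and follows the paper's proof essentially line by line: the Gauss equation with $H=0$ and Gauss--Bonnet give $\mathcal Q(1,1)=4\pi\kappa-\tfrac12\int_{\Sigma_0}(R_{\bar g}+|\IIo|^2)\,dA$, the charge decomposition \eqref{eq:leaf-charge-square} assembles $\mathscr D(\Sigma_0)$, and the Rayleigh quotient with the constant test function together with the discriminant argument give the remaining claims. One small wording fix: \eqref{eq:leaf-charge-square} is not a pointwise identity but an integral identity on a single surface, using only $\int E_\nu\,dA=4\pi q$ and $|\Sigma_0|=4\pi r^2$, and that is why it holds on $\Sigma_0$ without a collar.
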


\begin{proof}
Since $H=0$, the Gauss equation gives
$S=R/2-K+|\IIo|^2/2$, and hence
\[
 \mathcal Q(1,1)
 =4\pi\kappa-\frac12\int_{\Sigma_0}(R+|\IIo|^2)\,dA .
\]
Together with \eqref{eq:defect} and \eqref{eq:leaf-charge-square}, this
gives \eqref{eq:constant-slack}.  By the variational characterization
of $\lambda_1(\mathcal J)$,
$\lambda_1|\Sigma_0|\le\mathcal Q(1,1)$, which yields
\eqref{eq:auto-subextremal} and
$\Theta_{\kappa,\Lambda}\ge\lambda_1r^2>0$.

Multiplying \eqref{eq:auto-subextremal} by $r^2$ gives the first
inequality in \eqref{eq:horizon-eigenvalue-bound}; since
$\lambda_1>0$, the second follows.  Thus $\kappa\le0$ forces
$\Lambda<0$.  Finally, if $\Lambda+\lambda_1>0$, the quadratic in
$r^2$ can be nonpositive for some $r^2>0$ only if $\kappa=1$ and
$1-4(\Lambda+\lambda_1)q^2\ge0$.
\end{proof}

Equation~\eqref{eq:horizon-eigenvalue-bound} refines area--charge
bounds with cosmological constant; compare
\cite{SimonAreaCharge,HaywardShiromizuNakao,Woolgar,DainJaramilloReiris};
for rigidity in the equality cases of such bounds, see
\cite{MendesAreaCharge,GallowayMendesCharged,LimaSousaBatista}.
No sign of $m_0$ is assumed when $\kappa<0$, while $\kappa=0$ with
$\Lambda<0$ gives $m_0>0$ from \eqref{eq:hawking-cmc}.

\section{Hawking-mass variation along the CMC continuation}
\label{sec:exact-defect}

We now compute the variation of the charged Hawking mass along the CMC continuation.  The calculation yields exact identities for both the fixed-charge and current-charge masses in terms of the defect \(\mathscr B_h\) and the variation of the leaf charge.

\subsection{CMC and Jacobi variation algebra}
\label{subsec:variation-algebra}

Since the normal speed is $\rho_h$ and
$H_h=h$, the first variation of area gives
\[
 \frac d{dh}|\Sigma_h|
 =\int_{\Sigma_h}H_h\rho_h\,dA_h
 =4\pi r_h^2h\bar\rho_h.
\]
Thus
\begin{equation}\label{eq:r-derivative}
 r_h'=\frac12r_hh\bar\rho_h.
\end{equation}
Expanding the CMC formula \eqref{eq:hawking-cmc} along the continuation gives
the equivalent form
\begin{equation}\label{eq:cmc-mass-specialization}
 m_{\Lambda,a}(\Sigma_h)
 =\frac{\kappa r_h}{2}+\frac{a^2}{2r_h}
  -\frac{\Lambda r_h^3}{6}-\frac{r_h^3h^2}{8}.
\end{equation}

The lapse equation \(\mathcal J_h\rho_h=1\), with
$S_h=\operatorname{Ric}(\nu_h,\nu_h)+|\II_h|^2$ as in
\eqref{eq:jacobi}, has two useful integrated consequences.  First, since
\(\rho_h=\bar\rho_h+w_h\) and
\(\int_{\Sigma_h}w_h\,dA_h=0\), integration gives
\begin{equation}\label{eq:integrated-lapse}
\bar\rho_h\int_{\Sigma_h}S_h\,dA_h
+\int_{\Sigma_h}S_hw_h\,dA_h
=-4\pi r_h^2.
\end{equation}

Second, self-adjointness and the zero mean of $w_h$ give
\[
 0=\mathcal Q_h(\rho_h,w_h)
  =\bar\rho_h\mathcal Q_h(1,w_h)+\mathcal Q_h(w_h,w_h).
\]
Because
$\mathcal Q_h(1,w_h)=-\int_{\Sigma_h}S_hw_h\,dA_h$, we obtain the exact
fluctuation identity
\begin{equation}\label{eq:lapse-identity}
 \int_{\Sigma_h}S_hw_h\,dA_h
 =\frac1{\bar\rho_h}\mathcal Q_h(w_h,w_h)\ge0.
\end{equation}
The final inequality follows from the positive definiteness of $\mathcal Q_h$.

The Gauss equation on the CMC leaf reads
\[
 S_h
 =\frac12\bigl(R_{\bar g}+|\IIo_h|^2\bigr)
  -K_h+\frac34h^2.
\]
Integrating and using Gauss--Bonnet gives
\begin{equation}\label{eq:S-average}
 \frac1{4\pi}\int_{\Sigma_h}S_h\,dA_h
 =\frac1{8\pi}\int_{\Sigma_h}
   \bigl(R_{\bar g}+|\IIo_h|^2\bigr)\,dA_h
  -\kappa+\frac34r_h^2h^2.
\end{equation}

Finally, since
\[
 \frac{1}{|\Sigma_h|}
 \int_{\Sigma_h} E_{\nu_h}\,dA_h
 =\frac{q_h}{r_h^2},
\]
we obtain
\[
 \frac{4\pi q_h^2}{r_h^2}
 =\int_{\Sigma_h}|E|^2\,dA_h
  -\int_{\Sigma_h}|E_h^\top|^2\,dA_h
  -\int_{\Sigma_h}
   \left(E_{\nu_h}-\frac{q_h}{r_h^2}\right)^2dA_h.
\]

\subsection{Fixed- and current-charge variation formulas}
\label{subsec:exact-production-laws}

\begin{theorem}[Exact fixed- and current-charge production laws]
\label{thm:radial-identity}
Let $(\Omega,\bar g,E)$ be a charged positive-Jacobi CMC region.  Then
the fixed-charge Hawking mass satisfies
\begin{equation}\label{eq:exact-radial-fixed}
\frac d{dh}m_{\Lambda,q}(\Sigma_h)
=\frac{r_hh\bar\rho_h}{16\pi}\mathscr B_h
+\frac{h\bar\rho_h}{4r_h}(q_h^2-q^2),
\end{equation}
where $q=q_0$, while the current-charge Hawking mass satisfies
\begin{equation}\label{eq:exact-radial-current}
 \frac d{dh}m_{\Lambda,q_h}(\Sigma_h)
 =
 \frac{r_hh\bar\rho_h}{16\pi}\mathscr B_h
 +\frac{q_hq_h'}{r_h}.
\end{equation}
\end{theorem}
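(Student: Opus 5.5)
The plan is to differentiate the CMC form \eqref{eq:cmc-mass-specialization} of the mass directly. Every term of the result is then converted into defect quantities using the integrated lapse identities of \S\ref{subsec:variation-algebra}. With $a=q$ fixed, the chain rule gives
\[
\frac{d}{dh}m_{\Lambda,q}(\Sigma_h)=\Bigl(\frac{\kappa}{2}-\frac{q^2}{2r_h^2}-\frac{\Lambda r_h^2}{2}-\frac{3r_h^2h^2}{8}\Bigr)r_h'-\frac{r_h^3h}{4}.
\]
Substituting $r_h'=\tfrac12 r_hh\bar\rho_h$ from \eqref{eq:r-derivative}, I factor out $\tfrac12 r_hh\bar\rho_h$ everywhere. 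To do this for the last term, I write $\tfrac14 r_h^3h=\tfrac12 r_hh\bar\rho_h\cdot r_h^2/(2\bar\rho_h)$.

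The key step, and the only non-mechanical one, is to eliminate $r_h^2/\bar\rho_h$. Dividing \eqref{eq:integrated-lapse} by $4\pi\bar\rho_h$ and inserting the fluctuation identity \eqref{eq:lapse-identity} gives
\[
-\frac{r_h^2}{\bar\rho_h}=\frac1{4\pi}\int_{\Sigma_h}S_h\,dA_h+\frac{1}{4\pi\bar\rho_h^2}\mathcal Q_h(w_h,w_h).
\]
This is exactly where the lapse part of $\mathscr B_h$ enters. Next I replace $\frac1{4\pi}\int S_h$ by \eqref{eq:S-average}. Inside the resulting bracket (multiplying $\tfrac12 r_hh\bar\rho_h$), the $\pm\kappa/2$ terms cancel, and the $\pm\tfrac38 r_h^2h^2$ terms cancel as well. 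The bracket is left as
\[
-\frac{q^2}{2r_h^2}-\frac{\Lambda r_h^2}{2}+\frac1{16\pi}\int_{\Sigma_h}(R_{\bar g}+|\IIo_h|^2)\,dA_h+\frac{\mathcal Q_h(w_h,w_h)}{8\pi\bar\rho_h^2}.
\]

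I then split $R_{\bar g}=(R_{\bar g}-2|E|^2-2\Lambda)+2|E|^2+2\Lambda$. The $\Lambda$ term integrates to $\Lambda r_h^2/2$, which cancels. The $|E|^2$ term is handled by \eqref{eq:leaf-charge-square}: it contributes the tangential and flux-fluctuation terms of $\mathscr D(\Sigma_h)$ divided by $8\pi$, plus $q_h^2/(2r_h^2)$. Collecting terms, the bracket equals
\[
\frac{1}{8\pi}\mathscr D(\Sigma_h)+\frac{1}{8\pi\bar\rho_h^2}\mathcal Q_h(w_h,w_h)+\frac{q_h^2-q^2}{2r_h^2}=\frac{\mathscr B_h}{8\pi}+\frac{q_h^2-q^2}{2r_h^2}.
\]
Multiplying by $\tfrac12 r_hh\bar\rho_h$ gives \eqref{eq:exact-radial-fixed}.

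For \eqref{eq:exact-radial-current}, I differentiate the gap \eqref{eq:current-fixed-mass}, again using \eqref{eq:r-derivative}:
\[
\frac{d}{dh}\frac{q_h^2-q^2}{2r_h}=\frac{q_hq_h'}{r_h}-\frac{h\bar\rho_h}{4r_h}(q_h^2-q^2).
\]
Adding this to \eqref{eq:exact-radial-fixed} cancels the fixed-charge correction and leaves $q_hq_h'/r_h$.

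The main obstacle is bookkeeping in the elimination of $r_h^2/\bar\rho_h$. The signs in \eqref{eq:integrated-lapse} and \eqref{eq:lapse-identity} must combine so that $\mathcal Q_h(w_h,w_h)$ appears with coefficient $+1/\bar\rho_h^2$, matching the definition of $\mathscr B_h$. I will check this coefficient first. The remaining cancellations of $\kappa$, $h^2$ and $\Lambda$ then serve as consistency checks.
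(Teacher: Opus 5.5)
Your proposal is correct and follows essentially the paper's proof. You differentiate the CMC form of the mass with $r_h'=\tfrac12 r_hh\bar\rho_h$, remove the $r_h^3h$ term using the integrated lapse and fluctuation identities, rewrite $\int_{\Sigma_h}S_h\,dA_h$ via Gauss--Bonnet, regroup $R_{\bar g}$ and $|E|^2$ into $\mathscr D(\Sigma_h)$ through the leaf-charge decomposition, and then differentiate the calibration gap for the current-charge law; the only cosmetic difference is that you insert the fluctuation identity before the Gauss-equation substitution, whereas the paper carries $\int_{\Sigma_h}S_hw_h\,dA_h$ to the end.
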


\begin{proof}
Set $a=q$ in \eqref{eq:cmc-mass-specialization}.  Differentiating and
using \eqref{eq:r-derivative} gives
\[
 \frac d{dh}m_{\Lambda,q}(\Sigma_h)
 =\frac{r_hh\bar\rho_h}{4}
  \left(\kappa-\frac{q^2}{r_h^2}-\Lambda r_h^2
        -\frac34r_h^2h^2\right)-\frac14r_h^3h.
\]
Using \eqref{eq:integrated-lapse} and \eqref{eq:S-average}, we obtain
\begin{equation}\label{eq:radial-intermediate}
 \begin{aligned}
  \frac d{dh}m_{\Lambda,q}(\Sigma_h)
  ={}&\frac{r_hh\bar\rho_h}{4}
  \left[
   \frac1{8\pi}\int_{\Sigma_h}
   \bigl(R_{\bar g}+|\IIo_h|^2\bigr)\,dA_h
   -\frac{q^2}{r_h^2}-\Lambda r_h^2
  \right]\\
  &+\frac{r_hh}{16\pi}
    \int_{\Sigma_h}S_hw_h\,dA_h .
 \end{aligned}
\end{equation}

Using \eqref{eq:leaf-charge-square}, the bracket in
\eqref{eq:radial-intermediate} is exactly
\[
 \frac1{4\pi}\mathscr D(\Sigma_h)
 +\frac{q_h^2-q^2}{r_h^2}.
\]
Together with \eqref{eq:lapse-identity} and
\eqref{eq:production-defect}, this gives
\eqref{eq:exact-radial-fixed}.

For the current-charge mass, differentiate the algebraic relation
\eqref{eq:current-fixed-mass}.  Equation~\eqref{eq:r-derivative} gives
\[
 \frac d{dh}\left(\frac{q_h^2-q^2}{2r_h}\right)
 =\frac{q_hq_h'}{r_h}
  -\frac{h\bar\rho_h}{4r_h}(q_h^2-q^2).
\]
The second term cancels the charge-accumulation term in
\eqref{eq:exact-radial-fixed}, which proves \eqref{eq:exact-radial-current}.
\end{proof}

First variation formulas for the charged Hawking mass under general normal variations appear in \cite{BaltazarBarrosBatista,SousaLima}.  Along the CMC family considered here, the choice \(H_h=h\) leads instead to the exact variation formulas above.  Their geometric part is collected in the nonnegative quantity \(\mathscr B_h\), which records the scalar-constraint, extrinsic-geometric, electromagnetic, and lapse defects relevant to rigidity, while the remaining term records the variation of the leaf charge.  In particular, conservation of the leaf charge is not required.

Related monotonicity results for generalized Hawking mass along charged collar extensions were obtained in
\cite{CabreraCederbaumGehringPenuela}, using a different foliation
construction with conserved electric flux.  Extension and fill-in
constructions with charge, or with scalar curvature lower bounds, appear
in \cite{AlaeeCabreraCederbaum,McCormickFillins}.

\subsection{Integrated balance laws and same-sign monotonicity}
\label{subsec:same-sign-monotonicity}

Before imposing a sign condition, integrate
\eqref{eq:exact-radial-fixed}--\eqref{eq:exact-radial-current} over
$0\le h_1\le h_2\le h_*$.  The fixed-charge balance is
\begin{equation}\label{eq:integrated-budget}
 \begin{aligned}
 &m_{\Lambda,q}(\Sigma_{h_2})-m_{\Lambda,q}(\Sigma_{h_1})\\
 ={}&\int_{h_1}^{h_2}\frac{r_hh\bar\rho_h}{16\pi}
       \mathscr B_h\,dh
 +\int_{h_1}^{h_2}
       \frac{h\bar\rho_h}{4r_h}(q_h^2-q^2)\,dh,
 \end{aligned}
\end{equation}
whereas the current-charge balance is
\begin{equation}\label{eq:integrated-current-budget}
 \begin{aligned}
 &m_{\Lambda,q_{h_2}}(\Sigma_{h_2})
   -m_{\Lambda,q_{h_1}}(\Sigma_{h_1})\\
 ={}&\int_{h_1}^{h_2}\frac{r_hh\bar\rho_h}{16\pi}
       \mathscr B_h\,dh
 +\int_{h_1}^{h_2}\frac{q_h}{4\pi r_h}
   \left(\int_{\Sigma_h}\rho_h\zeta\,dA_h\right)dh.
 \end{aligned}
\end{equation}

\begin{corollary}[Same-sign and neutral monotonicity]
\label{cor:same-sign-monotonicity}
If \eqref{eq:same-sign-intro} holds, then both
\[
 h\longmapsto m_{\Lambda,q}(\Sigma_h),
 \qquad
 h\longmapsto m_{\Lambda,q_h}(\Sigma_h)
\]
are nondecreasing.  Moreover, if $q=0$, then, without any sign condition
on the charge source,
\[
 h\longmapsto m_{\Lambda,0}(\Sigma_h)
\]
is nondecreasing.

If either $q=0$ or \eqref{eq:same-sign-intro} holds, then
\begin{equation}
\label{eq:sharp-mass-chain}
 m_{\Lambda,q_*}(\Sigma_*)
 \ge m_{\Lambda,q}(\Sigma_*)
 \ge m_{\Lambda,q}(\Sigma_0)=m_0.
\end{equation}
\par\smallskip\noindent
The second comparison is equivalent to
\begin{equation}\label{eq:sharp-ineq}
 \frac{r_*^2h_*^2}{4}\le f_0(r_*),
\end{equation}
with $f_0$ as in \eqref{eq:f0}, and hence
\begin{equation}\label{eq:sharp-H}
 h_*\le\frac{2\sqrt{f_0(r_*)}}{r_*}.
\end{equation}
\end{corollary}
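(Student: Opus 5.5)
The plan is to read monotonicity directly off the exact identities of Theorem~\ref{thm:radial-identity}, using only sign information. On the collar we have $h\ge0$, $r_h>0$, and $\bar\rho_h>0$ because $\rho_h>0$. The quantity $\mathscr B_h\ge0$ by \eqref{eq:production-defect}: each term of $\mathscr D(\Sigma_h)$ is nonnegative under the scalar constraint, and $\mathcal Q_h(w_h,w_h)\ge0$ since $\lambda_1(\mathcal J_h)>0$. So the first term on the right of both \eqref{eq:exact-radial-fixed} and \eqref{eq:exact-radial-current} is nonnegative, and only the charge terms need attention.

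\textbf{Fixed-charge mass.} Under \eqref{eq:same-sign-intro}, Lemma~\ref{lem:collar-basics} gives $q_h^2\ge q^2$ via \eqref{eq:charge-monotonicity}. Hence $\frac{h\bar\rho_h}{4r_h}(q_h^2-q^2)\ge0$, and $m_{\Lambda,q}(\Sigma_h)$ is nondecreasing. If $q=0$, this term is $\frac{h\bar\rho_h}{4r_h}q_h^2\ge0$ with no hypothesis on $\zeta$, which gives the neutral statement.

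\textbf{Current-charge mass.} By \eqref{eq:flux-evolution} and $\rho_h>0$, we get $\varsigma q_h'\ge0$. Also $\varsigma q_h\ge0$ by \eqref{eq:charge-monotonicity}. Thus $q_hq_h'=(\varsigma q_h)(\varsigma q_h')\ge0$, and \eqref{eq:exact-radial-current} gives monotonicity.

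\textbf{The chain \eqref{eq:sharp-mass-chain}.} The second inequality follows from integrating the fixed-charge monotonicity, i.e.\ \eqref{eq:integrated-budget} with $h_1=0$, $h_2=h_*$; this holds in both the $q=0$ and the same-sign case. The first inequality comes from the gap formula \eqref{eq:current-fixed-mass} at $h=h_*$:
\[
m_{\Lambda,q_*}(\Sigma_*)-m_{\Lambda,q}(\Sigma_*)=\frac{q_*^2-q^2}{2r_*}.
\]
Under the same-sign hypothesis this is nonnegative by \eqref{eq:charge-monotonicity}. When $q=0$ it equals $q_*^2/(2r_*)\ge0$ trivially.

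\textbf{Reformulation \eqref{eq:sharp-ineq}.} Use the CMC form \eqref{eq:hawking-cmc} with $a=q$ on $\Sigma_*$:
\[
m_{\Lambda,q}(\Sigma_*)=\frac{r_*}{2}\Bigl(\kappa+\frac{q^2}{r_*^2}-\frac{\Lambda r_*^2}{3}-\frac{r_*^2h_*^2}{4}\Bigr).
\]
Multiplying the inequality $m_{\Lambda,q}(\Sigma_*)\ge m_0$ by $2/r_*>0$ and rearranging gives
\[
\frac{r_*^2h_*^2}{4}\le \kappa-\frac{2m_0}{r_*}+\frac{q^2}{r_*^2}-\frac{\Lambda r_*^2}{3}=f_0(r_*)
\]
by \eqref{eq:f0}; every step is reversible, so the two statements are equivalent. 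In particular $f_0(r_*)\ge0$, and since $h_*\ge0$, taking square roots yields \eqref{eq:sharp-H}.

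No step is a genuine obstacle. The only point needing care is the sign of $q_hq_h'$ in the current-charge case, which requires both halves of \eqref{eq:same-sign-intro} together with the positivity of $\rho_h$.
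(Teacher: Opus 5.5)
Your proposal is correct and follows the paper's proof essentially step for step. Both arguments combine nonnegativity of $\mathscr B_h$ with $q_h^2\ge q^2$ and $q_hq_h'\ge0$ (or $q=0$) in the two production laws, use the algebraic gap \eqref{eq:current-fixed-mass} for the first comparison, and use the identity $m_{\Lambda,q}(\Sigma_*)-m_0=\frac{r_*}{2}\bigl(f_0(r_*)-\frac{r_*^2h_*^2}{4}\bigr)$ for the reformulation. Your factorization $q_hq_h'=(\varsigma q_h)(\varsigma q_h')$ and your check that $f_0(r_*)\ge0$ before taking square roots simply make explicit steps the paper leaves implicit.
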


\begin{proof}
The common defect $\mathscr B_h$ is nonnegative.  Under the same-sign
condition, \eqref{eq:charge-monotonicity} gives
$q_h^2-q^2\ge0$ and $q_hq_h'\ge0$, so the two production laws give both
monotonicity statements.  If $q=0$, the fixed-charge production law becomes
\[
 \frac d{dh}m_{\Lambda,0}(\Sigma_h)
 =\frac{r_hh\bar\rho_h}{16\pi}\mathscr B_h
  +\frac{h\bar\rho_h}{4r_h}q_h^2
 \ge0 .
\]

In either case, fixed-charge monotonicity gives the second comparison in
\eqref{eq:sharp-mass-chain}, while the first is the algebraic calibration gap
\[
 m_{\Lambda,q_*}(\Sigma_*)-m_{\Lambda,q}(\Sigma_*)
 =\frac{q_*^2-q^2}{2r_*}\ge0.
\]
For $q=0$, this gap is $q_*^2/(2r_*)\ge0$ without a sign condition.
Finally, \eqref{eq:cmc-mass-specialization} and \eqref{eq:f0}
give
\[
 m_{\Lambda,q}(\Sigma_*)-m_0
 =\frac{r_*}{2}
 \left(f_0(r_*)-\frac{r_*^2h_*^2}{4}\right),
\]
which proves \eqref{eq:sharp-ineq}--\eqref{eq:sharp-H}.
\end{proof}

For \(q\ne0\), the same-sign condition \eqref{eq:same-sign-intro}
agrees with the condition used by McCormick in the charged Penrose
inequality with charged matter \cite[Thm.~3.1]{McCormickMatter}, where
the mass is computed with the initial charge.

\section{Equality and RN--(A)dS rigidity}
\label{sec:rigidity}

Throughout, $(\Omega,\bar g,E)$ is a charged positive-Jacobi CMC region,
and we assume either $q=0$ or the same-sign condition
\eqref{eq:same-sign-intro}, so that the comparison chain
\eqref{eq:sharp-mass-chain} holds. We determine the equality cases in both inequalities.

\begin{theorem}[Reissner--Nordstr\"om--(anti-)de Sitter rigidity]
\label{thm:main}
Let $(\Omega,\bar g,E)$ be a charged positive-Jacobi CMC region, and
assume that either $q=0$ or the same-sign condition
\eqref{eq:same-sign-intro} holds.  Set
\[
m_0:=m_{\Lambda,q}(\Sigma_0),\qquad
r:=r_0,\qquad r_*:=r_{h_*},
\]
and let $\gamma_0$ be the metric induced on $\Sigma_0$.

\begin{enumerate}[label=\textup{(\roman*)},itemsep=2pt]
\item If \eqref{eq:same-sign-intro} holds, then
\[
m_{\Lambda,q_*}(\Sigma_*)
=
m_{\Lambda,q}(\Sigma_*)
\]
if and only if $\zeta\equiv0$ throughout $\Omega$.

\item The fixed-charge equality
\[
m_{\Lambda,q}(\Sigma_*)
=
m_{\Lambda,q}(\Sigma_0)
\]
holds if and only if $(\Omega,\bar g,E)$ is isometric to the compact
collar bounded by the horizon $R=r$ and the coordinate surface
$R=r_*$ in the smooth spatial
Reissner--Nordstr\"om--(anti-)de Sitter model
\eqref{eq:intro-model}, with mass $m_0$, charge $q$, and
\[
\sigma_\kappa=r^{-2}\gamma_0,
\qquad
\kappa=\frac{\chi(\Sigma_0)}2.
\]
\end{enumerate}
\end{theorem}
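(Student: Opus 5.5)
Part (i) is algebraic. By \eqref{eq:current-fixed-mass} the gap equals $(q_*^2-q^2)/(2r_*)$. Under \eqref{eq:same-sign-intro}, Lemma~\ref{lem:collar-basics} shows that $\varsigma q_h$ is nondecreasing with $\varsigma q_h\ge\varsigma q\ge0$. Hence $q_*^2=q^2$ forces $\varsigma q_*=\varsigma q$. Then \eqref{eq:flux-slab} on $[0,h_*]$ gives $\int_\Omega\varsigma\zeta\,dV=0$, and with $\varsigma\zeta\ge0$ and smoothness this yields $\zeta\equiv0$. Conversely, $\zeta\equiv0$ gives $q_h\equiv q$.

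For part (ii), the ``if'' direction is a direct check. On the model every coordinate sphere $R=$const is umbilic with constant mean curvature $2\sqrt f/R$, and $m_{\Lambda,q}=m_0$ on each such leaf by the normalization of \eqref{eq:intro-hawking}. For ``only if'' I would integrate \eqref{eq:integrated-budget} over $[0,h_*]$. Every integrand is nonnegative: $h\bar\rho_h\ge0$, $\mathscr B_h\ge0$, and $q_h^2\ge q^2$, the last holding either by \eqref{eq:charge-monotonicity} or trivially when $q=0$. Equality therefore forces, for $h\in(0,h_*]$ and by continuity for all $h$, the vanishing of $\mathscr B_h$ and of $q_h^2-q^2$. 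Concretely:
\begin{itemize}
\item $R_{\bar g}=2|E|^2+2\Lambda$, $\IIo_h=0$ and $E^\top_h=0$;
\item $E_{\nu_h}=q_h/r_h^2$ on each leaf;
\item $w_h=0$, i.e.\ $\rho_h=\bar\rho_h$ depends only on $h$;
\item $q_h^2=q^2$.
\end{itemize}
In the case $q=0$ we get $q_h=0$ directly. In the same-sign case the monotonicity of $\varsigma q_h$ gives $q_h\equiv q$. Either way $q_h'=0$, so by \eqref{eq:flux-evolution} with $\rho_h>0$ we get $\int_{\Sigma_h}\zeta\,dA_h=0$ for each $h$. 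Since $\varsigma\zeta\ge0$ this gives $\zeta\equiv0$; when $q=0$, $\zeta\equiv0$ follows instead from $\operatorname{div}E=0$ once $E=(q/r_h^2)\nu_h$ is known with $q=0$.

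Next comes the geometric reconstruction. Since $\rho_h$ is leafwise constant, Remark~\ref{rem:collar-function} shows that the leaves are level sets of a function with leafwise-constant gradient length. Reparametrizing by $R=r_h$ and using \eqref{eq:r-derivative}, $dR/dh=\tfrac12 r_hh\bar\rho_h$, so the metric becomes $\bar g=\phi(R)^2dR^2+\gamma_R$ and the normal flow is geodesic. Umbilicity with constant mean curvature $h$ gives $\partial_R\gamma_R=(2/R)\gamma_R$ (the factor is forced by $|\Sigma_h|=4\pi R^2$), hence $\gamma_R=(R/r)^2\gamma_0$. Because the mass is constant, \eqref{eq:cmc-mass-specialization} gives $R^2h^2/4=f_0(R)$. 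Combining this with $dR/ds=\tfrac12 Rh$ in arclength yields $(dR/ds)^2=f_0(R)$, i.e.\ $\phi^2=1/f_0$ for $R>r$.

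It remains to show that $\gamma_0$ has constant curvature $\kappa/r^2$. I would use the saturated scalar constraint together with $|E|^2=q^2/R^4$ and the Gauss equation for the warped metric. The scalar curvature of $dR^2/f_0+R^2\sigma$ is $\frac{2}{R^2}(K_\sigma-\kappa)+(\text{model value})$, and the model value already equals $2q^2/R^4+2\Lambda$. So saturation forces $K_\sigma\equiv\kappa$, with $\sigma=r^{-2}\gamma_0$. The field is then $E=(q/R^2)\nu=(q/R^2)\sqrt{f_0}\,\partial_R$, which matches \eqref{eq:intro-model}.

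I expect the main obstacle to be the reconstruction step near the horizon $h=0$, where $dR/dh=0$ and $f_0(r)=0$, so the reparametrization by $R$ is singular. There I would work in the $h$ or arclength parameter instead and note that $f_0(r)=0$ is exactly the statement $m_0=m_{\Lambda,q}(\Sigma_0)$. I would also need care in deducing the pointwise vanishing of $\mathscr B_h$ at $h=0$ (where the weight $h$ kills the integrand) from continuity of $\mathscr B_h$ in $h$.
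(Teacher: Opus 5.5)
Your proof is correct and follows the paper's argument. Part (i) is the same charge-gap and slab-flux argument. In part (ii), monotonicity forces $\mathscr B_h=0$, $w_h=0$ and $q_h\equiv q$, and constant mass then gives the warped product with $R'(s)^2=f_0(R)$. The one genuine variation is that you get $K_\sigma\equiv\kappa$ from the saturated scalar constraint in the warped-product curvature formula, whereas the paper uses $\mathcal J_h1=\bar\rho_h^{-1}$ together with the Gauss equation; both work. At the horizon, identify $R(s)$ with the model profile by inverting $s=\int_r^{R}d\tilde R/\sqrt{f_0(\tilde R)}$, or via $R''=\tfrac12 f_0'(R)$ as the paper does, and not by uniqueness for $R'=\sqrt{f_0(R)}$, which is not Lipschitz at $R=r$.
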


The two equalities encode different information.  The first concerns
only the charge gap: under \eqref{eq:same-sign-intro}, it is equivalent
to $\zeta\equiv0$ on $\Omega$, but it imposes no condition on the
defects collected in $\mathscr B_h$.  By contrast, the fixed-charge
boundary equality forces the entire collar to be the corresponding
model collar.  If $q=0$ and no sign condition is imposed, the first
equality yields only $q_*=0$ and does not imply $\zeta\equiv0$.

\subsection{Vanishing of production}
\label{subsec:vanishing-production}

\begin{proposition}[Vanishing of production]
\label{prop:vanishing-production}
Suppose $m_{\Lambda,q}(\Sigma_*)=m_{\Lambda,q}(\Sigma_0)$.  Then, on
every leaf,
\begin{equation}\label{eq:defect-zero}
 \mathscr D(\Sigma_h)=0,\qquad
 \mathcal Q_h(w_h,w_h)=0,\qquad
 q_h=q,
\end{equation}
and hence pointwise
\begin{equation}\label{eq:model-equalities}
 R_{\bar g}=2|E|^2+2\Lambda,\qquad
 \IIo_h=0,\qquad
 E_h^\top=0,\qquad
 E_{\nu_h}=\frac q{r_h^2},
\end{equation}
together with $w_h=0$.  Moreover $\zeta=0$ throughout $\Omega$; if
$q=0$, then in addition $E\equiv0$.
\end{proposition}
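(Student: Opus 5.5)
The plan is to start from the integrated fixed-charge balance \eqref{eq:integrated-budget} with $h_1=0$ and $h_2=h_*$. By hypothesis the left side vanishes. Under either standing assumption, the integrand on the right is a sum of two nonnegative continuous functions of $h$. The first is $\frac{r_hh\bar\rho_h}{16\pi}\mathscr B_h$, which is nonnegative because $r_h,\bar\rho_h>0$, $h\ge0$ and $\mathscr B_h\ge0$. The second is $\frac{h\bar\rho_h}{4r_h}(q_h^2-q^2)$, which is nonnegative by \eqref{eq:charge-monotonicity} in the same-sign case and trivially when $q=0$. Hence both integrands vanish identically on $[0,h_*]$. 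For $h>0$ the prefactors are strictly positive, so $\mathscr B_h=0$ and $q_h^2=q^2$. Both quantities are continuous in $h$, since the family is smooth, so the identities extend to $h=0$; at $h=0$ the statement $q_0=q$ holds by definition. Since $\mathscr B_h$ is the sum of the two nonnegative pieces $\mathscr D(\Sigma_h)$ and $\bar\rho_h^{-2}\mathcal Q_h(w_h,w_h)$, each of them vanishes.

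To obtain $q_h=q$ and not merely $q_h^2=q^2$, I will argue as follows. In the same-sign case, $\varsigma q_h\ge\varsigma q\ge0$ together with $|q_h|=|q|$ forces $q_h=q$. If $q=0$ then $q_h^2=0$ directly. The pointwise equalities \eqref{eq:model-equalities} follow from the vanishing of $\mathscr D(\Sigma_h)$: it is a sum of integrals of nonnegative continuous functions, namely $R_{\bar g}-2|E|^2-2\Lambda$, $|\IIo_h|^2$, $|E_h^\top|^2$ and $(E_{\nu_h}-q_h/r_h^2)^2$, so each integrand vanishes, and we use $q_h=q$ in the last one. For the lapse, $\lambda_1(\mathcal J_h)>0$ makes $\mathcal Q_h$ positive definite, so $\mathcal Q_h(w_h,w_h)=0$ forces $w_h=0$.

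Finally I will show that $\zeta$ vanishes. Since $q_h$ is constant, \eqref{eq:flux-evolution} gives $\int_{\Sigma_h}\rho_h\zeta\,dA_h=0$ for every $h$. In the same-sign case $\varsigma\zeta\ge0$ and $\rho_h>0$, so $\zeta=0$ on every leaf, hence on $\Omega$. When $q=0$ and no sign condition is imposed, I will instead use the pointwise conclusions $E_h^\top=0$ and $E_{\nu_h}=q/r_h^2=0$, which give $E\equiv0$ on $\Omega$ and therefore $\zeta=\operatorname{div}E=0$. The only step needing care is this last one: positivity of $\rho_h$ alone does not kill $\zeta$ without a sign condition, so the argument must route through the pointwise vanishing of $E$. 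I will also check that continuity up to $h=0$ is legitimate, because the prefactor $h$ degenerates there; this is exactly why the conclusion at $h=0$ is obtained by continuity rather than read off the identity directly.
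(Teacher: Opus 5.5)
Your proof is correct and follows essentially the same route as the paper: splitting the fixed-charge production law into nonnegative terms with positive coefficients for $h>0$, extending to $h=0$ by continuity, upgrading $q_h^2=q^2$ to $q_h=q$ via the same-sign condition, and treating $\zeta$ separately in the same-sign and $q=0$ cases (through $E\equiv0$). The only differences are cosmetic. You integrate the balance law \eqref{eq:integrated-budget} directly instead of first deducing constancy of the mass from monotonicity, and you kill $\zeta$ leafwise via \eqref{eq:flux-evolution} instead of via the slab identity \eqref{eq:flux-slab}.
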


\begin{proof}
By Corollary~\ref{cor:same-sign-monotonicity}, the fixed-charge mass is
nondecreasing along the branch.  Since its endpoint values agree, it is
constant.  Hence, for every $h\in(0,h_*]$,
\begin{align*}
0
=\frac d{dh}m_{\Lambda,q}(\Sigma_h)
=\frac{r_hh\bar\rho_h}{16\pi}\mathscr D(\Sigma_h)
 +\frac{r_hh}{16\pi\bar\rho_h}
   \mathcal Q_h(w_h,w_h)
 +\frac{h\bar\rho_h}{4r_h}(q_h^2-q^2).
\end{align*}
By the standing hypotheses,
$q_h^2-q^2\ge0$, while
$\mathscr D(\Sigma_h)\ge0$ and
$\mathcal Q_h(w_h,w_h)\ge0$.  Since all three coefficients are
positive for $h>0$, we obtain
\[
\mathscr D(\Sigma_h)=0,\qquad
\mathcal Q_h(w_h,w_h)=0,\qquad
q_h^2=q^2.
\]
The positive definiteness of $\mathcal Q_h$ gives $w_h=0$.  If $q=0$,
then $q_h=0$; if $q\ne0$, the same-sign condition together with
\eqref{eq:charge-monotonicity} gives $q_h=q$.  By continuity, these
conclusions also hold at $h=0$.

Returning to the definition of the defect, we have on every leaf
\begin{align*}
0=\mathscr D(\Sigma_h)
={}&\frac12\int_{\Sigma_h}
 \bigl(R_{\bar g}-2|E|^2-2\Lambda\bigr)\,dA_h
 +\frac12\int_{\Sigma_h}|\IIo_h|^2\,dA_h\\
&+\int_{\Sigma_h}|E_h^\top|^2\,dA_h
 +\int_{\Sigma_h}
 \left(E_{\nu_h}-\frac{q}{r_h^2}\right)^2dA_h .
\end{align*}
Each integrand is continuous and nonnegative.  Therefore each vanishes
identically, and hence
\[
R_{\bar g}=2|E|^2+2\Lambda,\qquad
\IIo_h=0,\qquad
E_h^\top=0,\qquad
E_{\nu_h}=\frac q{r_h^2}
\]
pointwise on $\Sigma_h$.

If $q=0$, the last two equalities give $E\equiv0$ on $\Omega$, and
therefore $\zeta=0$.  If $q\ne0$, then the same-sign condition holds
and $q_*=q$.  The slab flux identity yields
\[
0=4\pi(q_*-q)=\int_\Omega\zeta\,dV_{\bar g}.
\]
Since $\varsigma\zeta\ge0$ throughout $\Omega$, it follows that
$\zeta\equiv0$.
\end{proof}

For $q\ne0$, vanishing of the defect gives
\[
R_{\bar g}=2|E|^2+2\Lambda,
\qquad
E=\frac{q}{r_h^2}\nu_h
\]
on each leaf, while $\zeta\equiv0$ follows from the constancy of the
leaf charge, the flux law, and the same-sign condition.  For $q=0$,
vanishing of the electromagnetic terms gives $E\equiv0$, and hence
$\zeta\equiv0$.

\subsection{Reconstruction}
\label{subsec:rn-reconstruction}

\begin{proposition}[Reconstruction]
\label{prop:reconstruction}
Under the conclusions of
Proposition~\textup{\ref{prop:vanishing-production}},
$(\Omega,\bar g,E)$ is isometric to the smooth proper-distance closure
of the model annulus $\{r<R\le r_*\}$ with charge $q$, mass $m_0$, and
$\sigma_\kappa=r^{-2}\gamma_0$.
\end{proposition}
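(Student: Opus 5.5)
Under the conclusions of Proposition~\ref{prop:vanishing-production}, every leaf $\Sigma_h$ is umbilic with $H_h=h$, the lapse $\rho_h=\bar\rho_h$ is leafwise constant, $E=(q/r_h^2)\nu_h$, and $R_{\bar g}=2|E|^2+2\Lambda=2q^2/r_h^4+2\Lambda$. The plan is to write the metric in collar coordinates, show it is a warped product with a constant-curvature fiber, and then identify the warping ODE with that of the model \eqref{eq:intro-model}.

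\textbf{Warped-product structure.} Since $\rho_h$ is leafwise constant, $|\nabla\tau|=\rho_h^{-1}$ is constant on leaves (Remark~\ref{rem:collar-function}). After reparametrizing $F_h$ by the normal flow, $\partial_hF_h=\rho_h\nu_h$, so $\bar g=\bar\rho_h^2\,dh^2+\gamma_h$. The evolution $\partial_h\gamma_h=2\rho_h\II_h=\bar\rho_h h\,\gamma_h$ uses $\IIo_h=0$. It integrates to $\gamma_h=(r_h/r)^2\gamma_0$, consistent with \eqref{eq:r-derivative}. Thus $\bar g=\bar\rho_h^2dh^2+r_h^2\sigma$ with $\sigma=r^{-2}\gamma_0$. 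Because $\rho_h>0$ and $r_h'=\tfrac12r_hh\bar\rho_h$, the quantity $R=r_h$ is strictly increasing for $h>0$, so we may switch to the coordinate $R$, giving $\bar g=dR^2/\phi(R)+R^2\sigma$ with $\phi=(r_h')^2/\bar\rho_h^2\cdot\bar\rho_h^{-0}=r_h^2h^2/4$. Indeed $dR=r_h'\,dh$, so $\bar\rho_h^2dh^2=\bar\rho_h^2dR^2/(r_h')^2=4dR^2/(R^2h^2)$.

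\textbf{Constant curvature of $\sigma$.} The Gauss equation for a warped product gives $R_{\bar g}=2K_\sigma/R^2+(\text{a function of }R)$, where $K_\sigma$ is the Gauss curvature of $\sigma$. Since $R_{\bar g}$ depends only on $R$ by the equalities above, $K_\sigma$ is constant. By Gauss--Bonnet and $|\sigma|=4\pi$, we get $K_\sigma=\kappa=\chi/2$.

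\textbf{Identifying $\phi$ with $f_0$.} The fixed-charge mass is constant, equal to $m_0$, on every leaf. By \eqref{eq:cmc-mass-specialization}, this is exactly $r_h^2h^2/4=f_0(r_h)$, as in the proof of Corollary~\ref{cor:same-sign-monotonicity}. Hence $\phi=f_0$ on $(r,r_*]$, which gives the metric of \eqref{eq:intro-model}. The electric field becomes $E=(q/R^2)\nu=(q/R^2)\sqrt{f_0}\,\partial_R$, matching the model. This algebraic identification avoids solving the scalar-curvature ODE; that ODE is then automatically consistent.

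\textbf{Main obstacle: the endpoint $h=0$.} Here $f_0(r)=0$, the coordinate $R$ degenerates, and we must show that $\Omega$ is the smooth proper-distance closure. First, $R=r$ is a simple root of $f_0$: $r$ is a minimum of $r_h$, and $f_0'(r)>0$ follows from $f_0(r)=0$ together with $\Theta_{\kappa,\Lambda}>0$ from Lemma~\ref{lem:constant-slack}, since $rf_0'(r)=\kappa-q^2/r^2-\Lambda r^2$ when $f_0(r)=0$. Next, the map $h\mapsto R$ extends to a homeomorphism of $[0,h_*]$ onto $[r,r_*]$, and the proper distance $s=\int\bar\rho_h\,dh$ is smooth in $h$. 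The collar $\Omega$, in coordinates $(s,x)$, is then the smooth metric $ds^2+R(s)^2\sigma$, where $R(s)$ solves $(dR/ds)^2=f_0(R)$ with $R(0)=r$. This is precisely the proper-distance description of the model near its horizon, so the isometry extends smoothly up to $\Sigma_0$, and it carries $\sigma_\kappa=r^{-2}\gamma_0$.
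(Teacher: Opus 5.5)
Your proof is correct. It shares the paper's skeleton: leafwise-constant lapse and umbilicity give a warped product with fiber $r^{-2}\gamma_0$, and constancy of $m_{\Lambda,q}\equiv m_0$ together with \eqref{eq:cmc-mass-specialization} gives $R'^2=f_0(R)$. It differs from the paper at two points.

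\textbf{Constant curvature of $\sigma$.} You use the warped-product scalar-curvature formula together with the leafwise constancy of $R_{\bar g}=2q^2/r_h^4+2\Lambda$. The paper instead notes that $\mathcal J_h1=\bar\rho_h^{-1}$ is leafwise constant and reads off $K_h$ from the Gauss equation. Both work.

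\textbf{The horizon.} You pass to the area-radius coordinate on $h>0$ and inherit smoothness up to $\Sigma_0$ from $\Omega$ itself. The paper differentiates $R'^2=f_0(R)$ to get the regular equation $R''=\frac12 f_0'(R)$ and extends it to $s=0$. It then uses reflection invariance and uniqueness to show that $R$ is even, with $R=r+\frac14f_0'(r)s^2+O(s^4)$.

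\textbf{A point to make explicit.} The first-order problem $(R')^2=f_0(R)$, $R(0)=r$, is degenerate at the simple root $r$ and does not determine $R$ by itself. For example, $R\equiv r$ solves it, as do solutions that stay at $r$ for a while before leaving. What identifies your $s$ with the model proper distance is $R'>0$ on $(0,s_*]$, which you already have. It gives $\int_r^{R(s)}f_0(u)^{-1/2}\,du=s$. Alternatively, the isometry of the open parts via the area radius extends to the metric completions, and then no ODE uniqueness is needed at all.

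\textbf{What each route buys.} Your route gives the one-sided identification the proposition requires, with essentially no ODE analysis at the degenerate point. The paper's route additionally shows that the warped product extends smoothly across $s=0$ as an even solution. That two-sided statement is exactly what Corollary~\ref{cor:local-maximality}(ii) uses to glue the collars on the two sides of $\Sigma$.
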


\begin{proof}
Since $w_h=0$, the lapse $\rho_h=\bar\rho_h$ is leafwise constant, so
$\mathcal J_h1=\bar\rho_h^{-1}$ is leafwise constant.  With
\eqref{eq:model-equalities}, the Gauss equation gives
\[
 \mathcal J_h1
 =K_h-\frac{q^2}{r_h^4}-\Lambda-\frac34h^2,
\]
so $K_h$ is constant, and Gauss--Bonnet yields $K_h=\kappa/r_h^2$.

After reparametrizing the embeddings by diffeomorphisms of
$\Sigma_0$, we may assume that the variation is normal.  The change of
variable
\[
ds=\bar\rho_h\,dh
\]
then gives a unit normal coordinate, so that
\[
\bar g=ds^2+\gamma_s,
\]
where $\gamma_s$ is the metric induced on the corresponding leaf.

Let $R(s)$ denote its area radius.  Since $\mathring{\II}_s=0$, the metric variation formula gives
$\partial_s\gamma_s=H_s\gamma_s$.  On the other hand, the first
variation of area and $|\Sigma_s|=4\pi R(s)^2$ give
$R'(s)/R(s)=H_s/2$.  Hence $R(s)^{-2}\gamma_s$ is independent of $s$.
Writing
\[
\sigma_\kappa=r^{-2}\gamma_0,
\]
we obtain
\[
\bar g=ds^2+R(s)^2\sigma_\kappa,
\qquad
E=\frac{q}{R(s)^2}\partial_s,
\qquad
H_s=\frac{2R'(s)}{R(s)}.
\]
By the preceding calculation, $\sigma_\kappa$ has area $4\pi$ and
Gaussian curvature $\kappa$.  For $\kappa\le0$, it is the normalized
horizon metric.

Since the fixed-charge mass is identically $m_0$ and
$H_s=2R'(s)/R(s)$, the CMC mass formula gives
\[
R'(s)^2=f_0(R(s)),
\]
with $f_0$ as in \eqref{eq:f0}.
At the horizon,
\[
R(0)=r,\qquad R'(0)=0,\qquad
f_0(r)=0,\qquad
f_0'(r)=\frac{\Theta_{\kappa,\Lambda}}r>0.
\]

For $s>0$, we have $R'(s)=H_sR(s)/2>0$, and hence differentiation of
the preceding identity gives
\[
R''(s)=\frac12 f_0'(R(s)).
\]
This equation extends to $s=0$.  Since the corresponding initial-value
problem is invariant under $s\mapsto-s$, uniqueness shows that $R$ is
even, and therefore
\begin{equation}\label{eq:even-horizon-expansion}
R(s)=r+\frac14f_0'(r)s^2+O(s^4).
\end{equation}
Thus the warped-product metric and the electric field extend smoothly
across $s=0$.

On $s>0$, the function $R=R(s)$ is a coordinate.  Since
$dR=\sqrt{f_0(R)}\,ds$, we obtain
\[
\bar g=\frac{dR^2}{f_0(R)}+R^2\sigma_\kappa,
\qquad
E=\frac q{R^2}\sqrt{f_0(R)}\,\partial_R.
\]
Hence $(\Omega,\bar g,E)$ is the corresponding spatial
Reissner--Nordstr\"om--(anti-)de Sitter collar.
\end{proof}

The preceding propositions establish the implication from the
fixed-charge equality to the model geometry.  We now complete the proof
of Theorem~\ref{thm:main}.

\begin{proof}[Proof of Theorem~\ref{thm:main}]
\textup{(i)}
By \eqref{eq:current-fixed-mass},
\[
m_{\Lambda,q_*}(\Sigma_*)-m_{\Lambda,q}(\Sigma_*)
=\frac{q_*^2-q^2}{2r_*}.
\]
Thus equality holds if and only if $q_*^2=q^2$.  Under
\eqref{eq:same-sign-intro},
\[
\varsigma q_*\ge\varsigma q\ge0,
\]
so this is equivalent to $q_*=q$.  The slab flux law then gives
\[
0=4\pi(q_*-q)=\int_\Omega\zeta\,dV_{\bar g}.
\]
Since $\varsigma\zeta\ge0$, it follows that $\zeta\equiv0$ on
$\Omega$.

Conversely, if $\zeta\equiv0$, then
\eqref{eq:flux-slab} gives $q_h=q$ for every $h$.  In particular,
$q_*=q$, and \eqref{eq:current-fixed-mass} gives
\[
m_{\Lambda,q_*}(\Sigma_*)=m_{\Lambda,q}(\Sigma_*).
\]

\textup{(ii)}
Suppose first that
\[
m_{\Lambda,q}(\Sigma_*)=m_{\Lambda,q}(\Sigma_0).
\]
Proposition~\ref{prop:vanishing-production} shows that the defect and
the lapse fluctuation vanish on every leaf, that $q_h=q$, and that
$\zeta\equiv0$ throughout the collar.  Proposition
~\ref{prop:reconstruction} then identifies $(\Omega,\bar g,E)$ with
the stated spatial Reissner--Nordstr\"om--(anti-)de Sitter collar.

Conversely, suppose that $(\Omega,\bar g,E)$ is the stated model
collar.  On each coordinate surface $\Sigma_R$, the electric flux is
$q$ and
\[
H=\frac{2\sqrt{f_0(R)}}{R}.
\]
Therefore \eqref{eq:hawking-cmc} gives
\[
\begin{aligned}
m_{\Lambda,q}(\Sigma_R)
=\frac{R}{2}
 \left(
 \kappa+\frac{q^2}{R^2}
 -\frac{\Lambda R^2}{3}
 -f_0(R)
 \right)
=m_0.
\end{aligned}
\]
Hence
\[
m_{\Lambda,q}(\Sigma_*)
=m_{\Lambda,q}(\Sigma_0).
\]
\end{proof}

The positive-Jacobi hypothesis restricts the radial extent of the
reconstructed model collar.

\begin{remark}[Positive-Jacobi range of the model]
\label{rem:model-range}
On a model coordinate leaf of radius $R$, the first Jacobi eigenvalue is
\[
\lambda_1(\mathcal J_R^{\mathrm{mod}})
=
-\frac{2}{R^4}
\bigl(\kappa R^2-3m_0R+2q^2\bigr).
\]
Thus the positive-Jacobi condition determines the admissible radial
range of the model collar.

If $\kappa=1$, positivity holds precisely for
\[
r\le R<R_{\mathrm{ph}}^+(m_0,q),
\qquad
R_{\mathrm{ph}}^+(m_0,q)
=
\frac12\left(3m_0+\sqrt{9m_0^2-8q^2}\right),
\]
and hence $r_*<R_{\mathrm{ph}}^+(m_0,q)$.  If $\kappa\le0$, the first
Jacobi eigenvalue remains positive for every $R\ge r$, so the
positive-Jacobi condition imposes no finite upper radial cutoff.
\end{remark}

\subsection{Local rigidity from local maximality}
\label{subsec:local-rigidity}

The rigidity theorem above yields a local result when the horizon is a
local maximizer of the fixed-charge Hawking mass.

\begin{corollary}[Local rigidity from local maximality]
\label{cor:local-maximality}
Let $\Sigma$ be a connected strictly stable minimal surface, fix a unit
normal $\nu$, and set
\[
q:=\frac1{4\pi}\int_\Sigma E_\nu\,dA.
\]
\begin{enumerate}[label=\textup{(\roman*)},itemsep=2pt]
\item Suppose that, on the $\nu$-side of $\Sigma$,
\[
R_{\bar g}\ge2|E|^2+2\Lambda,
\]
and that either $q=0$ or \eqref{eq:same-sign-intro} holds.  If $\Sigma$
locally maximizes $m_{\Lambda,q}$ among nearby normal graphs on that
side, then a sufficiently small one-sided neighborhood of $\Sigma$ is
isometric to the corresponding spatial
Reissner--Nordstr\"om--(anti-)de Sitter model collar.

\item Suppose that
\[
R_{\bar g}\ge2|E|^2+2\Lambda
\]
in a full neighborhood of $\Sigma$, and that either $q=0$ or
\eqref{eq:same-sign-intro} holds on each side of $\Sigma$ with respect
to the unit normal pointing into that side.  If $\Sigma$ locally
maximizes $m_{\Lambda,q}$ among all nearby normal graphs, then a
two-sided neighborhood of $\Sigma$ is isometric to the corresponding
two-sided model neighborhood; in particular, $\zeta\equiv0$ on that
neighborhood.
\end{enumerate}
\end{corollary}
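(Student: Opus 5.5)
The plan is to reduce both parts to Theorem~\ref{thm:main}(ii), applied to a short piece of the canonical CMC branch. For (i), Lemma~\ref{lem:branch} gives $\varepsilon>0$ and the unique branch $\{\Sigma_h\}_{|h|<\varepsilon}$, whose leaves are normal graphs over $\Sigma$ with $\rho_h>0$ and $\lambda_1(\mathcal J_h)>0$. Orient the branch so that increasing $h$ moves into the $\nu$-side; this is possible because $\rho_0>0$. Then for every $0<h_*<\varepsilon$ the leaves $\{\Sigma_h\}_{0\le h\le h_*}$ sweep out a compact region $\Omega_{h_*}$ on the $\nu$-side, and this region is a charged positive-Jacobi CMC region in the sense of Definition~\ref{def:fillin}: the map $(h,x)\mapsto F_h(x)$ is a diffeomorphism since $\partial_h u>0$, the leaves have $H=h$, and $\Sigma_0=\Sigma$ is strictly stable. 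The scalar constraint holds on this side by hypothesis, and so does either $q=0$ or \eqref{eq:same-sign-intro}. Corollary~\ref{cor:same-sign-monotonicity} therefore gives
\[
m_{\Lambda,q}(\Sigma_h)\ge m_{\Lambda,q}(\Sigma_0)\qquad\text{for }0\le h\le h_*.
\]

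Next I would use local maximality. After shrinking $h_*$, every leaf $\Sigma_h$ with $h\le h_*$ is a normal graph over $\Sigma$ that is $C^{2}$-close to $\Sigma$ (indeed smoothly close), so it lies in the neighborhood where $\Sigma$ maximizes $m_{\Lambda,q}$. This gives $m_{\Lambda,q}(\Sigma_h)\le m_{\Lambda,q}(\Sigma_0)$. Combined with the first step, $m_{\Lambda,q}(\Sigma_{h_*})=m_{\Lambda,q}(\Sigma_0)$. Theorem~\ref{thm:main}(ii) then identifies $\Omega_{h_*}$ with the model collar of mass $m_0$, charge $q$, and $\sigma_\kappa=r^{-2}\gamma_0$. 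Since $\Omega_{h_*}$ contains a one-sided neighborhood of $\Sigma$, this proves (i). The main point to check is that the topology in which local maximality is assumed is compatible with the branch, that is, that $h\mapsto u(h)$ is continuous into the relevant graph space. The implicit function theorem in Lemma~\ref{lem:branch} gives this in $C^{2,\alpha}$, which covers any reasonable choice.

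For (ii), I would apply (i) on each side separately. On the side reached by $h<0$, I reverse orientation, replacing $\nu$ by $-\nu$, $h$ by $-h$, and $q$ by $-q$. This is consistent with the hypothesis that \eqref{eq:same-sign-intro} holds on each side for the normal pointing into that side. A maximizer among all nearby graphs is in particular a maximizer among the graphs on either side. So each side is a model collar with the same horizon data $r$, $\sigma_\kappa=r^{-2}\gamma_0$, $|q|$, and $m_0$; note that $m_{\Lambda,q}$ depends only on $q^2$. It remains to glue the two sides. In the warped-product form $ds^2+R(s)^2\sigma_\kappa$ used in Proposition~\ref{prop:reconstruction}, each side solves $R''=\tfrac12 f_0'(R)$ with $R(0)=r$ and $R'(0)=0$. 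The two halves therefore fit together as the even solution on $(-\delta,\delta)$, which is the smooth two-sided model neighborhood by \eqref{eq:even-horizon-expansion}. With respect to a fixed normal, the electric field is $E=\tfrac{q}{R^2}\partial_s$ on both sides, because the flux through $\Sigma$ is the same $q$. Finally, $\zeta\equiv0$ on each side by Proposition~\ref{prop:vanishing-production}, and hence on the whole neighborhood.
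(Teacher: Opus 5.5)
Your proposal is correct and follows the paper's proof. For (i) you combine monotonicity along the short canonical CMC branch with local maximality to force $m_{\Lambda,q}(\Sigma_{h_*})=m_{\Lambda,q}(\Sigma_0)$ and invoke Theorem~\ref{thm:main}(ii); for (ii) you apply (i) on each side with the inward normal, use $m_{\Lambda,-q}=m_{\Lambda,q}$ to match horizon data, and glue via the even solution of $R''=\tfrac12 f_0'(R)$, exactly as the paper does (your remarks on orientation reversal and on continuity of $h\mapsto u(h)$ in $C^{2,\alpha}$ just make explicit what the paper leaves implicit).
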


\begin{proof}
\textup{(i)}
The short canonical CMC branch is a charged positive-Jacobi CMC region.
Monotonicity and local maximality give, respectively,
\[
m_{\Lambda,q}(\Sigma_h)\ge m_{\Lambda,q}(\Sigma)
\qquad\text{and}\qquad
m_{\Lambda,q}(\Sigma_h)\le m_{\Lambda,q}(\Sigma).
\]
Thus the mass is constant along the branch, and
Theorem~\ref{thm:main}\textup{(ii)} gives the model collar.

\textup{(ii)}
Apply part~\textup{(i)} on each side of $\Sigma$, for the unit normal
pointing into that side; by Proposition~\ref{prop:vanishing-production}
both collars are source-free.  Since
$m_{\Lambda,-q}=m_{\Lambda,q}$, the two reconstructed collars have the
same horizon data.  Their warping functions are therefore restrictions
of the same even solution, and the metric and electric field glue
smoothly across $\Sigma$.
\end{proof}

In the corresponding uncharged settings, this corollary recovers the
local Hawking-mass rigidity results of M\'aximo--Nunes
\cite{MaximoNunes} and Barros--Batista--Cruz
\cite{BarrosBatistaCruz}.  Under conserved electric flux, it also gives
the charged counterparts obtained by Baltazar--Barros--Batista
\cite{BaltazarBarrosBatista} and Sousa--Lima \cite{SousaLima}.  The
formulation above treats all signs of $\Lambda$ in a single argument.
Moreover, the one-sided statement permits charged matter satisfying
the same-sign condition; local maximality then forces the source to
vanish on the resulting model collar.

For $q\ne0$, reversing the normal changes the sign of the oriented
charge but not that of $\zeta=\operatorname{div}_{\bar g}E$.  The
hypothesis of part~\textup{(ii)} thus reads $q\zeta\ge0$ on the
$\nu$-side and $q\zeta\le0$ on the $-\nu$-side.  It permits sources
that change sign across $\Sigma$: Appendix~\ref{app:sourced-equality-collars}
with $Q(s)=q+as^2$, $q,a>0$, continued to $s<0$ by evenness, has
$\zeta=2as/R(s)^2$.

\section{Horizon Hessian and endpoint rigidity}
\label{sec:endpoint-hessian}

Let $\Sigma_0=\Sigma$ be the strictly stable minimal endpoint of the
canonical CMC branch, with
\[
 |\Sigma|=4\pi r^2,
 \qquad
 q=q_0.
\]
Throughout this section, $m_{\Lambda,q}$ denotes the Hawking mass with
the charge parameter frozen at $q$.  We set
\[
 \Theta
 :=
 \Theta_{\kappa,\Lambda}
 =
 \kappa-\frac{q^2}{r^2}-\Lambda r^2>0,
\]
where the positivity follows from
Lemma~\ref{lem:constant-slack}.

\subsection{The fixed-charge Hessian}
\label{subsec:fixed-charge-hessian}

\begin{proposition}[Second variation at the horizon]
\label{prop:second-variation}
The surface $\Sigma$ is a critical point of the fixed-charge Hawking
mass $m_{\Lambda,q}$.  For every smooth normal variation
$\{\Sigma_t\}$ with $\Sigma_0=\Sigma$ and initial normal speed
$\varphi$, its second variation is
\begin{equation}\label{eq:second-variation}
 \left.m_{\Lambda,q}''\right|_\varphi
 =
 \frac r{16\pi}
 \left[
  \frac{\Theta}{r^2}\mathcal Q(\varphi,\varphi)
  -
  \int_\Sigma(\mathcal J\varphi)^2\,dA
 \right].
\end{equation}
\end{proposition}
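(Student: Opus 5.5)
We compute directly from the definition \eqref{eq:intro-hawking} with $a=q$ fixed. Write $A(t)=|\Sigma_t|$ and $W(t)=\int_{\Sigma_t}H^2\,dA$, so that
\[
 m_{\Lambda,q}(\Sigma_t)=\sqrt{\tfrac{A}{16\pi}}\Bigl(\kappa+\tfrac{4\pi q^2}{A}-\tfrac{\Lambda A}{12\pi}-\tfrac{W}{16\pi}\Bigr)
 =\tfrac{1}{\sqrt{16\pi}}\Bigl(\kappa A^{1/2}+4\pi q^2A^{-1/2}-\tfrac{\Lambda}{12\pi}A^{3/2}-\tfrac{1}{16\pi}A^{1/2}W\Bigr).
\]
Since $H=0$ on $\Sigma$, we have $A'(0)=\int H\varphi=0$ and $W'(0)=\int 2HH'\varphi+H^3\varphi=0$. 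This gives criticality immediately.

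For the second variation, all terms involving $(A')^2$, $A'W'$, or $W$ itself vanish at $t=0$, because $A'=W=W'=0$ there. It then suffices to compute $A''(0)$ and $W''(0)$.

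\emph{Step 1: $A''(0)$.} Differentiate $A'=\int H\varphi\,dA$. Every term containing an undifferentiated $H$ drops at $t=0$, leaving
\[
A''(0)=\int H'\varphi\,dA=\int\varphi\,\mathcal J\varphi\,dA=\mathcal Q(\varphi,\varphi),
\]
using the convention $H'=\mathcal J\varphi$. Nonnormal acceleration components only enter multiplied by $H$ or as tangential terms, and these vanish.

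\emph{Step 2: $W''(0)$.} Write $W'=\int(2HH'+H^3\varphi)\,dA$ plus tangential divergence terms, which integrate to zero. Differentiating once more and setting $H=0$ leaves
\[
W''(0)=2\int(H')^2\,dA=2\int(\mathcal J\varphi)^2\,dA.
\]

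\emph{Step 3: assemble.} With $A=4\pi r^2$, the $A''$ coefficient is
\[
\tfrac{1}{\sqrt{16\pi}}\Bigl(\tfrac12\kappa A^{-1/2}-2\pi q^2A^{-3/2}-\tfrac{\Lambda}{8\pi}A^{1/2}\Bigr).
\]
Factoring out $\tfrac{1}{\sqrt{16\pi}}\cdot\tfrac12 A^{-1/2}=\tfrac{1}{16\pi r}$, the bracket is
\[
\kappa-\tfrac{4\pi q^2}{A}-\tfrac{\Lambda A}{4\pi}=\kappa-\tfrac{q^2}{r^2}-\Lambda r^2=\Theta.
\]
So this contribution is $\frac{\Theta}{16\pi r}\mathcal Q=\frac{r}{16\pi}\frac{\Theta}{r^2}\mathcal Q$.

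The $W''$ contribution is
\[
-\tfrac{1}{\sqrt{16\pi}}\tfrac{1}{16\pi}A^{1/2}\cdot 2\int(\mathcal J\varphi)^2
=-\tfrac{2r\sqrt{\pi}}{16\pi\cdot4\sqrt\pi}\int(\mathcal J\varphi)^2,
\]
that is, $-\frac{r}{32\pi}\int(\mathcal J\varphi)^2$.

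\emph{The main obstacle is this constant.} The claimed coefficient is $\frac{r}{16\pi}$, so I must recheck $W''$. The correct count is to consider $W'=\int(2HH'\,+\ldots)$. Then $W''=\int(2(H')^2+2HH''+\ldots)$, giving $2\int(H')^2$. Combined with $\sqrt{A/16\pi}=r/2$, the overall factor is $\frac{r}{2}\cdot\frac{1}{16\pi}\cdot 2=\frac{r}{16\pi}$. Hence the factor $\tfrac12$ above was an arithmetic slip, since $\sqrt{A}/\sqrt{16\pi}=r/2$.

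So carefully tracking the normalization $\sqrt{A/16\pi}=r/2$ yields \eqref{eq:second-variation}. Beyond this constant, the only care needed is to confirm that the tangential parts of the variation and the acceleration $\partial_t^2F$ contribute only through terms multiplied by $H$ or through exact divergences, both of which vanish at the minimal surface.
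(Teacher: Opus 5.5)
Your proof is correct and follows the same route as the paper. You write $m_{\Lambda,q}$ as a function of $A=|\Sigma_t|$ and $W=\int H^2$, and use $A'(0)=W'(0)=0$ to get criticality and to reduce $m''$ to $\partial_A m\,A''(0)+\partial_W m\,W''(0)$, with $A''(0)=\mathcal Q(\varphi,\varphi)$, $W''(0)=2\int_\Sigma(\mathcal J\varphi)^2\,dA$, $\partial_A m=\Theta/(16\pi r)$ and $\partial_W m=-r/(32\pi)$.

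The only flaw is in the exposition. The displayed value $-\frac{r}{32\pi}\int(\mathcal J\varphi)^2$ simply dropped the factor $2$ from $W''(0)$; it has nothing to do with $\sqrt{A/16\pi}=r/2$. Replace that display and the ``main obstacle'' paragraph with the one-line computation $-\frac r2\cdot\frac1{16\pi}\cdot 2=-\frac r{16\pi}$.
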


\begin{proof}
Set
\[
 A(t):=|\Sigma_t|,
 \qquad
 I(t):=\int_{\Sigma_t}H_t^2\,dA_t .
\]
Since $H_0=0$,
\[
 A'(0)=I'(0)=0,
\]
and hence $\Sigma$ is a critical point of $m_{\Lambda,q}$.  The standard
variation formulas give
\[
 A''(0)=\mathcal Q(\varphi,\varphi),
 \qquad
 I''(0)=2\int_\Sigma(\mathcal J\varphi)^2\,dA .
\]

Along the variation, the topology and the charge parameter are fixed,
so
\[
 m_{\Lambda,q}
 =
 \sqrt{\frac{A}{16\pi}}
 \left(
  \kappa+\frac{4\pi q^2}{A}
  -\frac{\Lambda A}{12\pi}
  -\frac{I}{16\pi}
 \right).
\]
At $A=4\pi r^2$ and $I=0$,
\[
 \frac{\partial m_{\Lambda,q}}{\partial A}
 =
 \frac{\Theta}{16\pi r},
 \qquad
 \frac{\partial m_{\Lambda,q}}{\partial I}
 =
 -\frac r{32\pi}.
\]
Since $A'(0)=I'(0)=0$, the chain rule therefore yields
\[
 \left.m_{\Lambda,q}''\right|_\varphi
 =
 \frac{\Theta}{16\pi r}\mathcal Q(\varphi,\varphi)
 -
 \frac r{16\pi}
 \int_\Sigma(\mathcal J\varphi)^2\,dA,
\]
which is \eqref{eq:second-variation}.
\end{proof}

\subsection{The canonical direction and endpoint rigidity}
\label{subsec:endpoint-rigidity}

Let $\rho>0$ be the unique solution of
\[
 \mathcal J\rho=1.
\]
Set
\[
 \bar\rho:=\frac1{|\Sigma|}\int_\Sigma\rho\,dA,
 \qquad
 w:=\rho-\bar\rho,
 \qquad
 p:=\frac{\rho}{\sqrt{\bar\rho}}.
\]
Then
\[
 \mathcal Jp=\bar\rho^{-1/2},
 \qquad
 \mathcal Q(p,p)=|\Sigma|.
\]

\begin{proposition}[Canonical endpoint identity]
\label{prop:endpoint-identity}
The fixed-charge Hessian in the direction $p$ satisfies
\begin{equation}\label{eq:endpoint-identity}
 \frac{16\pi}{r}
 \left.m_{\Lambda,q}''\right|_p
 =
 4\pi\Theta-\frac{|\Sigma|}{\bar\rho}
 =
 \mathscr D(\Sigma)
 +\frac1{\bar\rho^2}\mathcal Q(w,w)
 =
 \mathscr B_0
 \ge0.
\end{equation}
Moreover, along the canonical CMC branch,
\begin{equation}\label{eq:radial-endpoint}
 \left.
 \frac{d^2}{dh^2}
 \right|_{h=0}
 m_{\Lambda,q}(\Sigma_h)
 =
 \bar\rho
 \left.m_{\Lambda,q}''\right|_p .
\end{equation}
\end{proposition}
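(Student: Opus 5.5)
The plan is to substitute $\varphi=p$ into the second-variation formula \eqref{eq:second-variation} of Proposition~\ref{prop:second-variation}, and then rewrite $\Theta$ using the horizon stability bound of Lemma~\ref{lem:constant-slack} together with the fluctuation identity \eqref{eq:lapse-identity} at $h=0$.

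First, since $\mathcal Jp=\bar\rho^{-1/2}$ is constant, we have
\[
\int_\Sigma(\mathcal Jp)^2\,dA=\frac{|\Sigma|}{\bar\rho}.
\]
Likewise $\mathcal Q(p,p)=\bar\rho^{-1}\int_\Sigma\rho\,dA=|\Sigma|=4\pi r^2$. Substituting these into \eqref{eq:second-variation} gives
\[
\frac{16\pi}{r}\,m''_{\Lambda,q}\big|_p=4\pi\Theta-\frac{|\Sigma|}{\bar\rho},
\]
which is the first equality.

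For the second equality, Lemma~\ref{lem:constant-slack} gives $4\pi\Theta=\mathcal Q(1,1)+\mathscr D(\Sigma)$, so it suffices to show that
\[
\mathcal Q(1,1)-\frac{|\Sigma|}{\bar\rho}=\frac{1}{\bar\rho^2}\mathcal Q(w,w).
\]
Write $1=(\rho-w)/\bar\rho$. Then
\[
\mathcal Q(1,1)=\bar\rho^{-2}\bigl[\mathcal Q(\rho,\rho)-2\mathcal Q(\rho,w)+\mathcal Q(w,w)\bigr].
\]
Since $\mathcal J\rho=1$, we have $\mathcal Q(\rho,\rho)=\int_\Sigma\rho\,dA=\bar\rho|\Sigma|$ and $\mathcal Q(\rho,w)=\int_\Sigma w\,dA=0$. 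Hence $\mathcal Q(1,1)=|\Sigma|/\bar\rho+\bar\rho^{-2}\mathcal Q(w,w)$, as required. This identity can equally be read off from \eqref{eq:integrated-lapse} and \eqref{eq:lapse-identity} at $h=0$. Nonnegativity then follows from $\mathscr D\ge0$ and the positive definiteness of $\mathcal Q$ (strict stability), which also identifies the middle expression with $\mathscr B_0$ as defined in \eqref{eq:production-defect}.

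For \eqref{eq:radial-endpoint}, consider the canonical branch as a normal variation parametrized by $h$. Its initial normal speed is $\rho_0=\rho$, and $\Sigma$ is critical by Proposition~\ref{prop:second-variation}. The one subtlety, and the step I expect to need the most care, is that the second $h$-derivative of the mass along a curve through a critical point equals the Hessian in the initial velocity: the acceleration term drops out because the first variation vanishes at $\Sigma$. Tangential reparametrizations also do not matter, since the mass is geometric. This gives $\frac{d^2}{dh^2}\big|_{0}m_{\Lambda,q}(\Sigma_h)=m''|_{\rho}$, and by quadratic homogeneity $m''|_\rho=\bar\rho\,m''|_p$.

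As an independent check, the same result can be obtained directly from \eqref{eq:exact-radial-fixed}. Divide by $h$ and let $h\to0$: the charge term is $O(h^2)$, because $q_h-q=O(h)$ and it is multiplied by $h$. The limit is $\frac{r\bar\rho}{16\pi}\mathscr B_0$, which matches $\bar\rho\,m''|_p$ by the first part.
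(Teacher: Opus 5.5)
Your proof is correct and follows the paper's argument: you substitute $p$ into \eqref{eq:second-variation}, rewrite $4\pi\Theta$ via Lemma~\ref{lem:constant-slack}, and derive $\mathcal Q(1,1)=|\Sigma|/\bar\rho+\bar\rho^{-2}\mathcal Q(w,w)$ from self-adjointness and $\mathcal J\rho=1$. Your expansion of $1=(\rho-w)/\bar\rho$ replaces the paper's evaluation of $\mathcal Q(1,\rho)$ and $\mathcal Q(\rho,w)$, which is an immaterial difference, and \eqref{eq:radial-endpoint} then follows from criticality, initial speed $\rho$, and quadratic homogeneity. Your extra consistency check (dividing \eqref{eq:exact-radial-fixed} by $h$ and letting $h\to0$) is also valid, though it is not needed.
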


\begin{proof}
Since $\mathcal Q(p,p)=|\Sigma|$ and
$\mathcal Jp=\bar\rho^{-1/2}$,
Proposition~\ref{prop:second-variation} gives
\[
 \frac{16\pi}{r}
 \left.m_{\Lambda,q}''\right|_p
 =
 4\pi\Theta-\frac{|\Sigma|}{\bar\rho}.
\]
By self-adjointness,
\[
 |\Sigma|
 =
 \mathcal Q(1,\rho)
 =
 \bar\rho\,\mathcal Q(1,1)+\mathcal Q(1,w),
\]
while
\[
 0
 =
 \mathcal Q(\rho,w)
 =
 \bar\rho\,\mathcal Q(1,w)+\mathcal Q(w,w).
\]
Hence
\[
 \frac{|\Sigma|}{\bar\rho}
 =
 \mathcal Q(1,1)
 -
 \frac1{\bar\rho^2}\mathcal Q(w,w).
\]
Combining this with
\[
 4\pi\Theta-\mathcal Q(1,1)=\mathscr D(\Sigma)
\]
from Lemma~\ref{lem:constant-slack} proves
\eqref{eq:endpoint-identity}.

The initial normal speed of the canonical CMC branch is $\rho$.
Since $\Sigma$ is critical for $m_{\Lambda,q}$ and
$\rho=\sqrt{\bar\rho}\,p$, we obtain
\[
 \left.
 \frac{d^2}{dh^2}
 \right|_{h=0}
 m_{\Lambda,q}(\Sigma_h)
 =
 \left.m_{\Lambda,q}''\right|_\rho
 =
 \bar\rho
 \left.m_{\Lambda,q}''\right|_p .
\]
\end{proof}

\begin{corollary}[Endpoint rigidity]
\label{cor:endpoint-rigidity}
The following are equivalent:
\begin{enumerate}[label=\textup{(\roman*)},itemsep=2pt]
\item
\[
 \left.m_{\Lambda,q}''\right|_p=0;
\]

\item
\[
 \left.m_{\Lambda,q}''\right|_\varphi\le0
 \qquad
 \text{for every }\varphi\in C^\infty(\Sigma);
\]

\item
\[
 \lambda_1(\mathcal J)r^2=\Theta;
\]

\item $\Sigma$ carries the model horizon data
\begin{equation}\label{eq:model-data}
 K=\frac{\kappa}{r^2},
 \qquad
 \II=0,
 \qquad
 E=\frac q{r^2}\nu,
 \qquad
 R_{\bar g}=2|E|^2+2\Lambda
 \quad\text{on }\Sigma.
\end{equation}
\end{enumerate}
\end{corollary}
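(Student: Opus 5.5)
The plan is to prove (i)$\Leftrightarrow$(iv) through the endpoint identity, then (ii)$\Rightarrow$(i)$\Rightarrow$(iii)$\Rightarrow$(ii), and finally to identify (iii) with (iv).

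\emph{(i)$\Leftrightarrow$(iv).} By Proposition~\ref{prop:endpoint-identity}, (i) is equivalent to $\mathscr D(\Sigma)=0$ together with $\mathcal Q(w,w)=0$. Positivity of $\mathcal Q$ on a strictly stable surface turns the second condition into $w=0$, so $\rho$ is constant. Vanishing of the defect gives $\IIo=0$, $E^\top=0$, $E_\nu=q/r^2$, and $R_{\bar g}=2|E|^2+2\Lambda$ pointwise, since each integrand in \eqref{eq:defect} is nonnegative. Because $H=0$, umbilicity gives $\II=0$. With $\rho$ constant, $\mathcal J1=\rho^{-1}$ is constant. The minimal-surface Gauss equation $S=R/2-K+|\IIo|^2/2$ then shows $K$ is constant, and Gauss--Bonnet gives $K=\kappa/r^2$. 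Conversely, model data make $\mathscr D(\Sigma)=0$ and $S=q^2/r^4+\Lambda-\kappa/r^2$ constant, which is negative because $\Theta>0$. So $\mathcal J$ acts on constants by multiplication by $\Theta/r^2$, hence $\rho$ is constant, $w=0$, and (i) holds.

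\emph{(iii) and (iv).} Write $\mathcal Q(1,1)=4\pi\Theta-\mathscr D$ by Lemma~\ref{lem:constant-slack}. If (iii) holds, then \eqref{eq:auto-subextremal} becomes a chain of equalities. This forces $\mathscr D=0$ and shows that the constant function $1$ attains the Rayleigh quotient, so it is a first eigenfunction: $\mathcal J1=\lambda_1$, i.e.\ $\rho$ is constant, and we are back in (iv). Conversely, (iv) makes $S$ constant, so $\mathcal J=-\Delta+\Theta/r^2$ and $\lambda_1 r^2=\Theta$.

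\emph{(iii)$\Rightarrow$(ii) and (ii)$\Rightarrow$(i).} For (iii)$\Rightarrow$(ii), use \eqref{eq:second-variation}: it suffices to show $\int(\mathcal J\varphi)^2\ge\lambda_1\mathcal Q(\varphi,\varphi)$. Expand $\varphi$ in eigenfunctions of $\mathcal J$ with eigenvalues $\lambda_i\ge\lambda_1>0$. Then $\int(\mathcal J\varphi)^2=\sum\lambda_i^2c_i^2\ge\lambda_1\sum\lambda_ic_i^2=\lambda_1\mathcal Q(\varphi,\varphi)$. Together with $\Theta/r^2=\lambda_1$ this gives a nonpositive Hessian for every $\varphi$. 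For (ii)$\Rightarrow$(i), take $\varphi=p$: (ii) gives $m''|_p\le0$, while the endpoint identity gives $m''|_p\ge0$, so $m''|_p=0$.

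The only delicate point is the equality case of the Rayleigh quotient in (iii)$\Rightarrow$(iv), since it must yield a pointwise conclusion. This uses the simplicity of $\lambda_1$ and the fact that a minimizer of the Rayleigh quotient is an eigenfunction; both are standard.
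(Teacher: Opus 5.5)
Your proof is correct, but you arrange the implications differently from the paper. The paper runs one cycle, (iv)$\Rightarrow$(i)$\Rightarrow$(ii)$\Rightarrow$(iii)$\Rightarrow$(iv):
\begin{itemize}
\item For (i)$\Rightarrow$(ii), it first extracts $\mathscr D(\Sigma)=0$ and $w=0$ from the endpoint identity, so that $\mathcal J=-\Delta_\Sigma+\Theta/r^2$. Integrating by parts then gives $\frac{16\pi}{r}\left.m_{\Lambda,q}''\right|_\varphi=-\int_\Sigma\bigl((\Delta_\Sigma\varphi)^2+\frac{\Theta}{r^2}|\nabla\varphi|^2\bigr)dA$.
\item For (ii)$\Rightarrow$(iii), it tests on a first eigenfunction.
\end{itemize}
You take three different steps:
\begin{itemize}
\item You prove (i)$\Rightarrow$(iv) directly, which is the leafwise argument of Proposition~\ref{prop:reconstruction} specialized to $h=0$.
\item You obtain (iii)$\Rightarrow$(ii) from the eigenfunction expansion of $\mathcal J$.
\item You close with (ii)$\Rightarrow$(i) by testing on $p$ and using $\mathscr B_0\ge0$.
\end{itemize}
You also fill in the step the paper calls immediate in (iv)$\Rightarrow$(i): model data give $S=-\Theta/r^2$, hence $\rho$ is constant and $w=0$.

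Your spectral step is the more informative of the two routes. It is really the diagonalization
\[
\left.m_{\Lambda,q}''\right|_\varphi=\frac r{16\pi}\sum_i\lambda_i\Bigl(\frac{\Theta}{r^2}-\lambda_i\Bigr)c_i^2 ,
\]
which holds on any strictly stable minimal surface. Combined with Lemma~\ref{lem:constant-slack}, it gives (ii)$\Leftrightarrow$(iii) at once, without first knowing that the potential of $\mathcal J$ is constant. The paper's route instead produces an explicit negative-semidefinite form whose kernel is visibly the constants.

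Your scheme also contains redundant arrows. The implications (i)$\Rightarrow$(iv), (iv)$\Rightarrow$(iii), (iii)$\Rightarrow$(ii), (ii)$\Rightarrow$(i) already form a closed cycle. So the Rayleigh-quotient equality case in (iii)$\Rightarrow$(iv), which you single out as the delicate point, is not needed in your argument at all. Where you do use it, simplicity of $\lambda_1$ plays no role: any minimizer of the Rayleigh quotient is a $\lambda_1$-eigenfunction.
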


\begin{proof}
The implication \textup{(iv)}$\Rightarrow$\textup{(i)} is immediate
from \eqref{eq:endpoint-identity}.

Assume \textup{(i)}.  Then
\[
 \mathscr D(\Sigma)=0,
 \qquad
 w=0.
\]
Hence $\rho$ is constant and
\[
 \mathcal J=-\Delta_\Sigma+c,
 \qquad
 c=\frac{\mathcal Q(1,1)}{|\Sigma|}
   =\frac{\Theta}{r^2}.
\]
Therefore \eqref{eq:second-variation} gives
\[
 \frac{16\pi}{r}
 \left.m_{\Lambda,q}''\right|_\varphi
 =
 -\int_\Sigma
 \left((\Delta_\Sigma\varphi)^2
 +\frac{\Theta}{r^2}|\nabla\varphi|^2\right)dA
 \le0,
\]
proving \textup{(i)}$\Rightarrow$\textup{(ii)}.

If \textup{(ii)} holds, applying \eqref{eq:second-variation} to a
normalized first eigenfunction gives
\[
 0\ge
 \lambda_1\left(\frac{\Theta}{r^2}-\lambda_1\right).
\]
Thus $\lambda_1r^2\ge\Theta$, while
Lemma~\ref{lem:constant-slack} gives the reverse inequality.  Hence
\textup{(iii)} holds.

Finally, under \textup{(iii)},
\[
 \lambda_1|\Sigma|
 \le\mathcal Q(1,1)
 =4\pi\Theta-\mathscr D(\Sigma)
 \le4\pi\Theta
 =\lambda_1|\Sigma|.
\]
Thus $\mathscr D(\Sigma)=0$ and equality holds in the Rayleigh
quotient for the constant function, so
$\mathcal J1=\lambda_1$.  The vanishing of the defect gives
\[
 R_{\bar g}=2|E|^2+2\Lambda,
 \qquad
 \II=0,
 \qquad
 E=\frac q{r^2}\nu,
\]
and the Gauss equation yields $K=\kappa/r^2$.  Hence \textup{(iv)}
holds.
\end{proof}

Consequently, if $\Sigma$ does not carry the model horizon data, then
\[
 \left.m_{\Lambda,q}''\right|_p>0,
 \qquad
 \left.
 \frac{d^2}{dh^2}
 \right|_{h=0}
 m_{\Lambda,q}(\Sigma_h)
 =
 \bar\rho
 \left.m_{\Lambda,q}''\right|_p
 >0.
\]
Thus the canonical CMC branch rules out local maximality of the
fixed-charge Hawking mass at every nonmodel horizon.
The implications
\textup{(ii)}$\Rightarrow$\textup{(iii)}$\Rightarrow$\textup{(iv)}
follow \cite{MaximoNunes} and \cite[Prop.~4]{BaltazarBarrosBatista};
the equivalence with \textup{(i)} rests on \eqref{eq:endpoint-identity}.

These equivalent conditions characterize only the horizon data along
$\Sigma$; they do not determine the surrounding collar.
Appendix~\ref{app:concentrated-shell} constructs nonmodel collars whose
horizons carry the exact model data.  Rigidity of the full collar
instead follows from the fixed-charge boundary equality in
Theorem~\ref{thm:main}.

\begin{remark}[Why the charge is frozen]
Let $q_t$ be the flux through a normal variation with initial speed
$\varphi$.  Then
\[
 \left.\frac d{dt}\right|_{t=0}
 m_{\Lambda,q_t}(\Sigma_t)
 =
 \frac q{4\pi r}\int_\Sigma\varphi\zeta\,dA.
\]
Hence the current-charge mass is critical for all normal variations
only when $q\zeta\equiv0$ on $\Sigma$; compare
\cite[Remark~2.20]{PenuelaDiaz}.  The fixed-charge mass is
critical at every minimal endpoint, which is why the Hessian above is
formulated with the charge parameter frozen.
\end{remark}

\appendix
\section{Sourced CMC collars and a concentrated shell}
\label{app:matter-example}

\subsection{Sourced collars with scalar-constraint equality}
\label{app:sourced-equality-collars}

Fix a closed connected orientable surface $\Sigma$ carrying a constant-curvature
metric $\sigma_\kappa$ of curvature $\kappa$ and area $4\pi$, so that
$\kappa=\chi/2$.  Let $r>0$ and $q\in\mathbb R$ satisfy
\[
 \Theta
 :=
 \kappa-\frac{q^2}{r^2}-\Lambda r^2
 >0.
\]

\begin{proposition}[Sourced collars with scalar-constraint equality]
\label{prop:sourced-equality-collars}
Let $Q$ be smooth near $s=0$ with $Q(0)=q$.  Then, for some
$s_1>0$, the initial-value problem
\begin{equation}\label{eq:sourced-warping-ode}
 2RR''+R'^2
 =
 \kappa-\frac{Q(s)^2}{R^2}-\Lambda R^2,
 \qquad
 R(0)=r,
 \qquad
 R'(0)=0,
\end{equation}
has a smooth solution satisfying
\[
 R'(s)>0\quad (s>0),
 \qquad
 \frac d{ds}\left(\frac{2R'}R\right)>0
 \quad\text{on }[0,s_1].
\]
The data
\begin{equation}\label{eq:sourced-proper-distance-data}
 \bar g=ds^2+R(s)^2\sigma_\kappa,
 \qquad
 E=\frac{Q(s)}{R(s)^2}\,\partial_s
\end{equation}
on $\Omega=[0,s_1]\times\Sigma$ form a charged positive-Jacobi CMC
region with minimal horizon $\Sigma_0$.  Moreover,
\begin{equation}\label{eq:sourced-collar-identities}
 R_{\bar g}=2|E|^2+2\Lambda,
 \qquad
 \operatorname{div}_{\bar g}E=\frac{Q'}{R^2}.
\end{equation}
In particular, $Q'\not\equiv0$ produces a genuine charge source.
\end{proposition}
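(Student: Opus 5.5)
Existence comes first. The ODE \eqref{eq:sourced-warping-ode} reads $R''=F(s,R,R')$ with
\[
F(s,R,R')=\frac{1}{2R}\Bigl(\kappa-\frac{Q^2}{R^2}-\Lambda R^2-R'^2\Bigr),
\]
and $F$ is smooth near $(0,r,0)$ because $r>0$. Picard--Lindel\"of then gives a smooth local solution. At $s=0$ we get $R''(0)=\Theta/(2r)>0$. Hence $R'(s)=R''(0)s+O(s^2)>0$ for small $s>0$. For the mean curvature $H(s)=2R'/R$ we have $H'(0)=2R''(0)/r=\Theta/r^2>0$, so by continuity $H'>0$ on some $[0,s_1]$. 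This also gives $R'>0$ on $(0,s_1]$.

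Next come the geometric identities for the warped product $\bar g=ds^2+R^2\sigma_\kappa$. The slices $\{s\}\times\Sigma$ are umbilic with $\II=(R'/R)\gamma_s$ and $H=2R'/R$. The scalar curvature is the standard expression
\[
R_{\bar g}=\frac{2\kappa}{R^2}-\frac{4R''}{R}-\frac{2R'^2}{R^2}.
\]
We have $|E|^2=Q^2/R^4$. Substituting the ODE, $-4RR''-2R'^2=-2(\kappa-Q^2/R^2-\Lambda R^2)$, so
\[
R_{\bar g}=\frac{2Q^2}{R^4}+2\Lambda=2|E|^2+2\Lambda .
\]
For the divergence, $\operatorname{div}(f\partial_s)=R^{-2}\partial_s(R^2f)$ with $f=Q/R^2$ gives $Q'/R^2$. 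This proves \eqref{eq:sourced-collar-identities}.

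Then we verify Definition~\ref{def:fillin}. Since $H$ is strictly increasing, we reparametrize the leaves by $h=H(s)\in[0,h_*]$ with $h_*=H(s_1)$. The lapse is then $\rho_h=ds/dh=1/H'(s)>0$, which is constant on each leaf. The map $(h,x)\mapsto(s(h),x)$ is a diffeomorphism onto $\Omega$, $\Sigma_0=\{0\}\times\Sigma$ is minimal, and the boundary is as required. Because the lapse is leafwise constant, $\mathcal J_h1=\rho_h^{-1}$, so the constant function is a positive eigenfunction. A positive eigenfunction belongs to the first eigenvalue, so $\lambda_1(\mathcal J_h)=H'(s)>0$. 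Alternatively one can compute directly
\[
S=\operatorname{Ric}(\partial_s,\partial_s)+|\II|^2=-\frac{2R''}{R}+\frac{2R'^2}{R^2},
\]
which is constant on each leaf, and $-S=H'$.

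Strict stability of $\Sigma_0$ is the case $s=0$, where $\lambda_1=\Theta/r^2>0$. The same computation also shows that the canonical branch of Lemma~\ref{lem:branch} coincides with these slices, by the uniqueness in that lemma.

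The main obstacle is keeping the sign conventions consistent: the identity $H'=\mathcal J\rho$ with $\rho=1$ in $s$ must give $\mathcal J1=-S=\partial_sH$. I would check this directly from $\operatorname{Ric}(\partial_s,\partial_s)=-2R''/R$ on the warped product. If $Q'\not\equiv0$, then $\zeta\ne0$, so the charge source is genuine.
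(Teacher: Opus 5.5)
Your proposal is correct and follows the paper's proof essentially step for step: local ODE existence with $R''(0)=\Theta/(2r)$ and $\frac{d}{ds}(2R'/R)|_{s=0}=\Theta/r^2$, umbilic leaves with $\mathcal J_s=-\Delta_{\Sigma_s}+H'(s)$ and leafwise-constant lapse $\rho=1/H'$, and the warped-product formulas for $R_{\bar g}$ and $\operatorname{div}_{\bar g}E$. Your explicit check that $-S=\partial_s(2R'/R)$ via $\operatorname{Ric}(\partial_s,\partial_s)=-2R''/R$ only fills in a step the paper states without computation.
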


\begin{proof}
Standard ODE theory gives a unique smooth local solution of
\eqref{eq:sourced-warping-ode}.  At $s=0$,
\[
 R''(0)=\frac{\Theta}{2r}>0,
 \qquad
 \left.\frac d{ds}\left(\frac{2R'}R\right)\right|_{s=0}
 =\frac{\Theta}{r^2}>0.
\]
After reducing $s_1$, the stated inequalities follow.

The leaves $\Sigma_s=\{s\}\times\Sigma$ are umbilic, with
\[
 H(s)=\frac{2R'}R.
\]
Since the Jacobi potential is leafwise constant,
\[
 \mathcal J_s
 =
 -\Delta_{\Sigma_s}+H'(s),
\]
and hence $H'>0$ gives the positive-Jacobi condition.  Parametrizing
the leaves by $h=H(s)$, the lapse is $\rho=1/H'$, which is positive
and leafwise constant.

Finally,
\[
 R_{\bar g}
 =
 -\frac{4R''}R
 +\frac{2(\kappa-R'^2)}{R^2},
 \qquad
 |E|^2=\frac{Q^2}{R^4},
 \qquad
 \operatorname{div}_{\bar g}E=\frac{Q'}{R^2}.
\]
The first identity in \eqref{eq:sourced-collar-identities} is therefore
equivalent to \eqref{eq:sourced-warping-ode}.
\end{proof}

At the horizon,
\[
 K=\frac{\kappa}{r^2},
 \qquad
 \II=0,
 \qquad
 E=\frac q{r^2}\nu,
 \qquad
 R_{\bar g}=2|E|^2+2\Lambda,
\]
so $\Sigma_0$ carries the model horizon data even when a charge source
is present immediately outside it.

When $Q\equiv q$, equation \eqref{eq:sourced-warping-ode} has the first
integral
\begin{equation}\label{eq:sourced-model-first-integral}
 R'^2
 =
 \kappa-\frac{2m_0}{R}
 +\frac{q^2}{R^2}
 -\frac{\Lambda R^2}{3},
 \qquad
 m_0
 =
 \frac r2
 \left(
  \kappa+\frac{q^2}{r^2}
  -\frac{\Lambda r^2}{3}
 \right),
\end{equation}
and the construction reduces to the source-free
Reissner--Nordstr\"om--(anti-)de Sitter collar.

If
\[
 \varsigma q\ge0,
 \qquad
 \varsigma Q'\ge0
\]
for some $\varsigma\in\{-1,1\}$, then the same-sign condition
\eqref{eq:same-sign-intro} holds.  On the other hand, the cosmological
analogue of McCormick's pointwise condition
\cite[(3.16)]{McCormickMatter},
\begin{equation}\label{eq:mccormick-pointwise}
 R_{\bar g}-2|E|^2-2\Lambda
 \ge4\left|\operatorname{div}_{\bar g}E\right|,
\end{equation}
would force $Q'\equiv0$ in this equality case.  Thus
\eqref{eq:mccormick-pointwise} excludes the genuinely sourced collars
above, whereas the same-sign condition permits them.
\subsection{A concentrated scalar-constraint shell}
\label{app:concentrated-shell}

The endpoint result characterizes the data on the horizon, but does
not determine the surrounding collar.  We now show that even exact
model horizon data together with an arbitrarily small fixed-charge
mass gap do not give $C^2$ control of the collar.

Let
\[
 \bar g_0=ds^2+R_0(s)^2\sigma_\kappa,
 \qquad
 E_0=\frac q{R_0(s)^2}\partial_s
\]
be a source-free model positive-Jacobi CMC collar on
$\Omega=[0,s_*]\times\Sigma$, with
\[
 R_0(0)=r,
 \qquad
 R_0'(0)=0.
\]
Write
\[
 H_0=\frac{2R_0'}{R_0},
 \qquad
 \mu:=\min_{[0,s_*]}H_0'>0,
\]
and let $m_0$ be its constant fixed-charge Hawking mass.

We keep the electric flux equal to $q$ and insert a nonnegative
scalar-constraint defect into a shrinking shell.  For radial data
\[
 \bar g=ds^2+R(s)^2\sigma_\kappa,
 \qquad
 E=\frac q{R(s)^2}\partial_s,
\]
one has
\[
 \operatorname{div}_{\bar g}E=0
\]
and
\begin{equation}\label{eq:shell-scalar-formula}
 R_{\bar g}-2|E|^2-2\Lambda
 =
 \frac2{R^2}
 \left(
  \kappa-R'^2-2RR''
  -\frac{q^2}{R^2}-\Lambda R^2
 \right).
\end{equation}
Thus prescribing
\begin{equation}\label{eq:shell-prescribed-defect}
 R_{\bar g}-2|E|^2-2\Lambda
 =
 \frac{4\eta(s)}{R^2}
\end{equation}
amounts to solving
\begin{equation}\label{eq:shell-ode}
 2RR''+R'^2
 =
 \kappa-\frac{q^2}{R^2}
 -\Lambda R^2-2\eta.
\end{equation}
For the fixed-charge Hawking mass of $\Sigma_s=\{s\}\times\Sigma$,
this equation gives
\begin{equation}\label{eq:shell-mass-derivative}
 \frac d{ds}m_{\Lambda,q}(\Sigma_s)
 =
 R'(s)\eta(s).
\end{equation}
Thus the scalar-constraint defect is pointwise proportional to
$\eta$, while the fixed-charge mass gap depends on the weighted
integral of $\eta$.

Choose
\[
 0\le\beta\in C_c^\infty((1,2)),
 \qquad
 \max\beta=1,
\]
fix a sufficiently small $a>0$, and set
\[
 \eta_\varepsilon(s)
 :=
 a\beta\left(\frac{s}{\varepsilon}\right).
\]
Let $R_\varepsilon$ solve \eqref{eq:shell-ode} with
$\eta=\eta_\varepsilon$ and
\[
 R_\varepsilon(0)=r,
 \qquad
 R_\varepsilon'(0)=0.
\]
Since $0\le\eta_\varepsilon\le a$ is supported in
$[\varepsilon,2\varepsilon]$, Gronwall's inequality gives, for all
sufficiently small $\varepsilon$, existence on $[0,s_*]$ and
\begin{equation}\label{eq:shell-c1-estimate}
 \|R_\varepsilon-R_0\|_{C^1([0,s_*])}
 \le C\,\|\eta_\varepsilon\|_{L^1(0,s_*)}
 =O(\varepsilon),
\end{equation}
with $C$ independent of $\varepsilon$.  Moreover,
\[
 \left(\frac{2R_\varepsilon'}{R_\varepsilon}\right)'
 =
 \frac{
  \kappa-q^2/R_\varepsilon^2
  -\Lambda R_\varepsilon^2
  -3R_\varepsilon'^2
  -2\eta_\varepsilon
 }{R_\varepsilon^2}.
\]
Choosing $a$ sufficiently small preserves
\[
 \left(\frac{2R_\varepsilon'}{R_\varepsilon}\right)'
 \ge\frac{\mu}{2},
\]
so
\[
 \bar g_\varepsilon
 =
 ds^2+R_\varepsilon(s)^2\sigma_\kappa,
 \qquad
 E_\varepsilon
 =
 \frac q{R_\varepsilon(s)^2}\partial_s
\]
is again a source-free charged positive-Jacobi CMC collar.

Since $\eta_\varepsilon=0$ on $[0,\varepsilon]$, uniqueness gives
\[
 R_\varepsilon=R_0
 \qquad\text{on }[0,\varepsilon].
\]
Thus $(\bar g_\varepsilon,E_\varepsilon)$ agrees exactly with the model
near the horizon.  In particular, the horizon carries the model data
and
\[
 \left.m_{\Lambda,q}''\right|_p=0.
\]

On the other hand, \eqref{eq:shell-mass-derivative} gives
\[
 \begin{aligned}
 m_{\Lambda,q}(\Sigma_{s_*};\bar g_\varepsilon)-m_0
 &=
 \int_\varepsilon^{2\varepsilon}
 R_\varepsilon'(s)\eta_\varepsilon(s)\,ds  \\
 &>0.
 \end{aligned}
\]
Since $R_\varepsilon'(s)=O(\varepsilon)$ on the shell,
\begin{equation}\label{eq:shell-small-mass-gap}
 m_{\Lambda,q}(\Sigma_{s_*};\bar g_\varepsilon)-m_0
 =
 O(\varepsilon^2).
\end{equation}

Finally, choose $t_0\in(1,2)$ with $\beta(t_0)=1$ and put
$s_\varepsilon=\varepsilon t_0$.  Then
\[
 R_\varepsilon(s_\varepsilon)=r+O(\varepsilon^2),
 \qquad
 R_\varepsilon(s_\varepsilon)-R_0(s_\varepsilon)
 =
 O(\varepsilon^2).
\]
Using \eqref{eq:shell-prescribed-defect} and the model equality,
\[
 \begin{aligned}
 R_{\bar g_\varepsilon}(s_\varepsilon)
 -
 R_{\bar g_0}(s_\varepsilon)
 ={}&
 2q^2
 \left(
  \frac1{R_\varepsilon(s_\varepsilon)^4}
  -
  \frac1{R_0(s_\varepsilon)^4}
 \right)
 +
 \frac{4a}{R_\varepsilon(s_\varepsilon)^2}  \\
 ={}&
 \frac{4a}{r^2}+O(\varepsilon^2).
 \end{aligned}
\]
Therefore $\bar g_\varepsilon\to\bar g_0$ in $C^1$ by
\eqref{eq:shell-c1-estimate}, but not in $C^2$, even though the horizon
is exactly model and the fixed-charge mass gap tends to zero.

\section*{Acknowledgements}
\begin{sloppypar}
The author thanks Hyun Chul Jang for helpful discussions on this work.

This research was supported by Global - Learning \& Academic research
institution for Master's\textperiodcentered PhD students, and Postdocs
(G-LAMP) Program of the National Research Foundation of Korea (NRF)
grant funded by the Ministry of Education (No.~RS-2025-25442252).
\end{sloppypar}

During the preparation of this manuscript the author used large
language models to check the
computations against the written text, to verify bibliographic data,
and to improve the readability of the exposition.  All mathematical
content was conceived and verified by the author.


\begin{thebibliography}{99}

\bibitem{AlaeeCabreraCederbaum}
A.~Alaee, A.~J.~Cabrera Pacheco, and C.~Cederbaum,
\emph{Asymptotically flat extensions with charge},
Adv. Theor. Math. Phys. \textbf{23} (2019), 1951--1980.

\bibitem{BaltazarBarrosBatista}
H.~Baltazar, A.~Barros, and R.~Batista,
\emph{A local rigidity theorem for minimal two-spheres in charged
time-symmetric initial data set},
Lett. Math. Phys. \textbf{113} (2023), 92.

\bibitem{BarrosBatistaCruz}
A.~Barros, R.~Batista, and T.~Cruz,
\emph{Hawking mass and local rigidity of minimal surfaces in
three-manifolds},
Comm. Anal. Geom. \textbf{25} (2017), 1--23.

\bibitem{CabreraCederbaumGehringPenuela}
A.~J.~Cabrera Pacheco, C.~Cederbaum, P.~Gehring, and A.~Pe\~nuela Diaz,
\emph{Constructing electrically charged Riemannian manifolds with minimal
boundary, prescribed asymptotics, and controlled mass},
J. Geom. Phys. \textbf{185} (2023), 104746.

\bibitem{ChenMcCormick}
P.-N.~Chen and S.~McCormick,
\emph{Quasi-local Penrose inequalities with electric charge},
Int. Math. Res. Not. IMRN (2022), no.~22, 17333--17362.

\bibitem{DainJaramilloReiris}
S.~Dain, J.~L.~Jaramillo, and M.~Reiris,
\emph{Area--charge inequality for black holes},
Class. Quantum Grav. \textbf{29} (2012), 035013.

\bibitem{DisconziKhuri}
M.~M.~Disconzi and M.~A.~Khuri,
\emph{On the Penrose inequality for charged black holes},
Class. Quantum Grav. \textbf{29} (2012), 245019.

\bibitem{GallowayMendesCharged}
G.~J.~Galloway and A.~Mendes,
\emph{Some rigidity results for charged initial data sets},
Nonlinear Anal. \textbf{256} (2025), 113780.

\bibitem{HaywardShiromizuNakao}
S.~A.~Hayward, T.~Shiromizu, and K.-i.~Nakao,
\emph{A cosmological constant limits the size of black holes},
Phys. Rev. D \textbf{49} (1994), 5080--5085.

\bibitem{Jang}
P.~S.~Jang,
\emph{Note on cosmic censorship},
Phys. Rev. D \textbf{20} (1979), 834--838.

\bibitem{KWY}
M.~Khuri, G.~Weinstein, and S.~Yamada,
\emph{Proof of the Riemannian Penrose inequality with charge for
multiple black holes},
J. Differential Geom. \textbf{106} (2017), 451--498.

\bibitem{KWYExtensions}
M.~Khuri, G.~Weinstein, and S.~Yamada,
\emph{Extensions of the charged Riemannian Penrose inequality},
Class. Quantum Grav. \textbf{32} (2015), 035019.

\bibitem{LimaSousaBatista}
A.~B.~Lima, P.~A.~Sousa, and R.~M.~Batista,
\emph{Rigidity of marginally outer trapped surfaces in charged initial
data sets},
Lett. Math. Phys. \textbf{115} (2025), 41.

\bibitem{MaximoNunes}
D.~M\'aximo and I.~Nunes,
\emph{Hawking mass and local rigidity of minimal two-spheres in
three-manifolds},
Comm. Anal. Geom. \textbf{21} (2013), 409--433.

\bibitem{McCormickFillins}
S.~McCormick,
\emph{Fill-ins with scalar curvature lower bounds and applications to
positive mass theorems},
Ann. Global Anal. Geom. \textbf{65} (2024), 26.

\bibitem{McCormickMatter}
S.~McCormick,
\emph{On the charged Riemannian Penrose inequality with charged matter},
Class. Quantum Grav. \textbf{37} (2020), 015007.

\bibitem{MendesAreaCharge}
A.~Mendes,
\emph{Area-charge inequality and local rigidity in charged initial data
sets},
Class. Quantum Grav. \textbf{42} (2025), 225020.

\bibitem{Miao2009}
P.~Miao,
\emph{On a localized Riemannian Penrose inequality},
Comm. Math. Phys. \textbf{292} (2009), 271--284.

\bibitem{MiaoWangXie}
P.~Miao, Y.~Wang, and N.~Xie,
\emph{On Hawking mass and Bartnik mass of CMC surfaces},
Math. Res. Lett. \textbf{27} (2020), 855--885.

\bibitem{PenuelaDiaz}
A.~Pe\~nuela Diaz,
\emph{Rigidity and positivity of Hawking quasi-local energy on
area-constrained critical surfaces},
preprint (2025), arXiv:2507.16588.

\bibitem{SimonAreaCharge}
W.~Simon,
\emph{Bounds on area and charge for marginally trapped surfaces with
cosmological constant},
Class. Quantum Grav. \textbf{29} (2012), 062001.

\bibitem{SousaLima}
P.~A.~Sousa and A.~B.~Lima,
\emph{Charged Hawking mass and local rigidity of three-manifolds},
J. Geom. Anal. \textbf{33} (2023), 11.

\bibitem{Woolgar}
E.~Woolgar,
\emph{Bounded area theorems for higher genus black holes},
Class. Quantum Grav. \textbf{16} (1999), 3005--3012.

\end{thebibliography}
\end{document}